\long\def\comment#1{}
\newfont{\bbb}{msbm10 scaled 700}
\newfont{\bb}{msbm10 scaled 1100}
\newcommand{\Pc}{{\cal P}}
\newtheorem{definition}{Definition}
\newtheorem{theorem}{Theorem}
\newtheorem{lemma}{Lemma}
\newtheorem{corollary}{Corollary}
\newtheorem{proof}{Proof}
\newtheorem{problem}{Problem}
\begin{document}

\title{Index Coding Problem with Side Information Repositories}
\author{
\IEEEauthorblockN{Karthikeyan Shanmugam and  Alexandros G. Dimakis } 
\IEEEauthorblockA{ \\Department of Electrical and Computer Engineering \\
University of Texas at Austin \\
Austin, TX 78712-1684 \\
karthiksh@utexas.edu,dimakis@austin.utexas.edu \\}
\and
\IEEEauthorblockN{Giuseppe Caire }
\IEEEauthorblockA{\\Department of Electrical Engineering \\
University of Southern California\\
Los Angeles, CA- 90089-2560 \\
caire@usc.edu}
}

\date{\today}

\maketitle

{\abstract
               To tackle the expected enormous increase in mobile video traffic in cellular networks, an architecture involving a base station along with caching femto stations (referred to as helpers), storing popular files near users, has been proposed \cite{golrezaei2011femtocaching}. The primary benefit of caching is the enormous increase in downloading rate when a popular file is available at helpers near a user requesting that file. In this work, we explore a secondary benefit of caching in this architecture through the lens of \textit{index coding}. We assume a system with $n$ users and constant number of caching helpers. \textit{Only} helpers store files, i.e. have side information.  We investigate the following scenario: Each user requests a distinct file that is \textit{not} found in the set of helpers nearby. Users are served coded packets (through an \textit{index code}) by an omniscient base station. Every user decodes its desired packet from the coded packets and the side information packets from helpers nearby. We assume that users can obtain any file stored in their neighboring helpers without incurring transmission costs.               
                With respect to the index code employed, we investigate two achievable schemes: 1) \textit{XOR coloring} based on coloring of the side information graph associated with the problem and 2)Vector XOR coloring based on fractional coloring of the side information graph. We show that the general problem reduces to a canonical problem where every user is connected to exactly one helper under some topological constraints.  For the canonical problem, with constant number of helpers ($k$), we show that the complexity of computing the best XOR/vector XOR coloring schemes are polynomial in the number of users $n$. The result exploits a special \textit{complete bi-partite structure} that the side information graphs exhibit for any finite $k$. 
}

\section{Introduction}
	           The main challenge that faces today's operators of 3GPP LTE Advanced enabled cellular networks is an enormous increase in mobile video traffic. The mobile video traffic in a cell is also expected to have a  lot of redundancy in demand for stored videos (Youtube-like short videos) due to few videos being more popular compared to others \cite{zink2009characteristics}. To tackle the traffic load bottleneck by exploiting this redundancy in requests, an architecture involving small WiFi enabled \textit{caching helpers} was proposed in \cite{golrezaei2011femtocaching}. In this framework, few femto-like stations would be distributed in a cell, controlled by the base station, with an additional feature of caching popular files near users. The motivation for caching at the helpers is to reduce the bandwidth demand on the backhaul to these helpers. A high bandwidth backhaul network is expensive to deploy. If a user request has a high cache hit in one of the helpers nearby, then the user downloads the file from the helper through a high speed WiFi link without having to rely on the base station or the backhaul.  This is the primary benefit of caching.

		            There is a potential hidden benefit to be leveraged due to such caching in helpers even when there are cache misses for all users. The benefit is gained when base station sends few 'coded' packets where coding is done over the set of packets requested by all the users. A user terminal decodes its desired packet from the coded transmissions by downloading some \textit{side information} available as cached data from a subset of the helpers it is connected to even when those helpers do not have the user's desired packet. The total number of coded transmissions by the base station may be lower compared to the the \textit{naive scheme} where the base station transmits all user requests in sequence. This is because if some helpers in the network cache previous requests in the network and if requests arising from the entire network have redundancy over time then the following scenario may occur: user $A$ and user $B$ both have cache misses. But user $A$ has a cache hit with the helper near user $B$ but located far away from user $A$ and vice versa. Now the packets intended for both users can be XORed saving one transmission for the base station. The user-helper links are high speed links and hence pulling side information required to decode is assumed to involve zero cost.  In this work, we assume that a connectivity graph between helpers and users, an arbitrary cache state of helpers (users do not cache) and a set of simultaneous \textit{distinct} user requests, such that every user has a \textit{cache miss}, are given as the input. We investigate the problem of minimizing the number of broadcast transmissions by the base station in such a wireless network which is a special case of the \textit{index coding} problem.

	            The general index coding problem is a noiseless broadcast problem that was introduced in \cite{bar2011index}. There are $n$ users, each wants a distinct packet and has multiple \textit{side information} packets, i.e. some packets desired by others. The problem of multiple users requesting the same packet is more general and it requires representation using hypergraphs or another special  representation in terms of  bipartite graphs \cite{neely2011dynamic}\cite{alon2008broadcasting}. The problem instance when each user requests a different packet can be represented using a directed \textit{side information graph} where every vertex represents a user (and equivalently a packet desired by the user as every request is distinct). There is a directed edge from user $i$ to user $j$ if user $i$ has the packet desired by $j$ as side information. In the very special case when side information is symmetric, i.e. user $i$ has user $j$'s packet as side information whenever user $j$ has user $i$'s packet as side information, the problem instance can be represented by an undirected side information graph. 
	            	
		 Each packet is drawn from an alphabet of size $t$ bits. An index code is a function that maps $n$ packets, each of $p$ bits, to a packet of size $t$ bits. The encoded message of $t$ bits is transmitted over a noiseless broadcast channel. An index code is valid if every user can use the encoded message and the side information it has to decode the desirable packet (of $p$ bits), i.e. there is a suitable decoding function. The index coding problem is to compute the minimum broadcast rate $\frac{t}{p}$ over all possible index codes. The minimum broadcast rate is denoted by $\beta$. The general problem of computing minimum broadcast rate over all possible encoding functions is very hard to characterize \cite{blasiak2010index}.  
	           
	         Let the message alphabet, represented by $p$ bits,  be a finite field. Let $t$ be a multiple of $p$, i.e. encoded message is a sequence of elements drawn from the field. If every subsequent field element in the encoded message is a linear function of the $n$ desired messages ($n$ field elements), then such an index code is a scalar linear index code over that finite field. 
	
		            \textbf{Prior Work:} It was shown \cite{bar2011index} that the length of the optimum scalar linear index code (over a given finite field) is equal to the graph parameter minrank, over that field, of the directed side information graph. This is the directed graph analog of the graph parameter introduced in \cite{haemers1979some} to upper bound the Shannon capacity of an undirected graph. Computing minrank is known to be NP-hard \cite{peeters1996orthogonal} \cite{dau2012optimal}. In \cite{lubetzky2009nonlinear}, the authors show the existence of graphs where the multiplicative gap between binary minrank and the optimum broadcast rate $\beta$ grows polynomially in the number of nodes $n$. Some recent works  \cite{ong2012optimal} \cite{berliner2011index} also dealt with tractability of computing the binary minrank for restricted classes of graphs.  In \cite{haviv2011linear},  the minrank parameter of Erd\H{o}s-Renyi graphs $G(n,p)$ were analyzed. Authors of \cite{chlamtavc2012linear} provide a polynomial time algorithm based on graph coloring for approximating minranks for undirected graph instances whose minrank is $k$. Recently, interference alignment approach has been applied to the problem of index coding \cite{maleki2012index}.   In \cite{blasiak2010index}, a set of linear programs were developed whose optima effectively provide lower and upper bounds on the optimum broadcast rate $\beta$.  Fractional coloring and coloring of the complement of the undirected side information graph are the most well known popular upper bounds to the optimum rate $\beta$ \cite{blasiak2010index} yielding achievable linear binary schemes. Apart from the index coding literature, a slightly different approach combining the cache design and coding problem was taken in \cite{maddah2012fundamental} where all users cache data (no helpers). The problem is to design caches first and then the coding scheme to minimize packets transmitted for the worst case user requests all of which are known to come from a fixed library of files with possible overlaps among user requests.
		             		            		             		          		
	             In this work, we study a special case of the index coding problem with distinct requests motivated by the caching architecture mentioned above in a wireless network setting. Here, a side information packet is a packet cached only in the helpers and a user has access to them at zero cost if it is connected to the helper containing it. The base stations transmit coded packets to satisfy distinct requests of $n$ users.  We study the algorithmic tractability for the well known linear achievable schemes, based on graph coloring, when user side information is restricted to come from a constant number of caching helpers. We note that intractability of coloring based achievable schemes for the general index coding problem arise due to the freedom available in choosing an arbitrary side information set for each user as part of the problem instance.
	             	            	           
	            A typical scenario of the problem of interest in this work is illustrated in Fig. \ref{Fig:IndexCodinghelper}. In the figure, users request different packets, none of which are found in their respective set of neighboring helpers. In the figure, user $1$ downloads packet $P5$ from helper $1$ in order to decode its desired packet $P1$ from the coded packet $P1 + P5$ transmitted by the base station. In this example, $7$ user requests can be satisfied by $4$ coded transmissions from the base station, plus some local transmission from the helpers, which can take place at a much faster speed, due to the high capacity of short range links.  Hence, the presence of helpers with side information can alleviate the base station congestion even though the requests are not found in the helper caches, as in this example.  Helpers have an arbitrary cache state given as part of the input. Further, the connectivity graph between users and helpers in specified. The base station transmits coded packets. 
	            
	            We investigate the computational complexity of a coding scheme where every user needs to listen to just one coded packet that contains the desired packet XORed with other packets. At the time of decoding, the user needs  to just download an extra packet from neighboring helpers, which is an appropriate 'mix' of side information packets present at helpers nearby to cancel the ``interference'' formed by the undesired packets, and decode the desired packet.  We call this coding scheme an \textbf{XOR coloring} scheme. XOR coloring can be shown to be based on coloring the complement of the undirected side information graph. The XOR coloring scheme is a scalar scheme, i.e. every transmission is linear in the packets of different users.  We also investigate a vector coding scheme called vector XOR coloring that generalizes the above scheme and is based on fractional coloring of the complement of the side information graph. Every user request consists of a set of sub-packets. Under this scheme, the base station codes over these sub-packets in such a way that every user needs to listen to exactly one coded sub-packet to recover one of the desired sub-packets using side-information from the helpers. 
	            	                   

	\begin{figure}
	 \centering
	  \includegraphics[width=8cm]{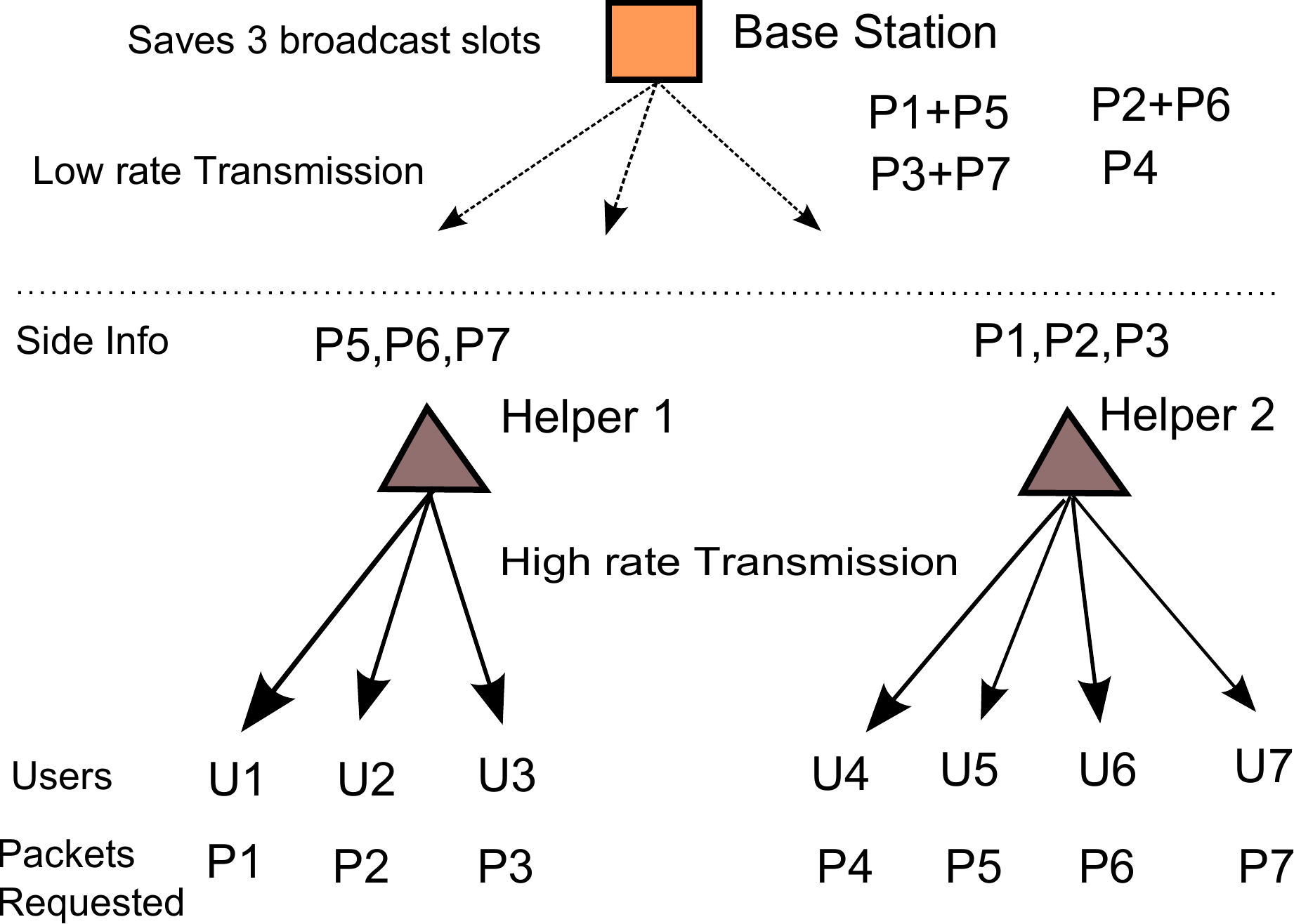}
	  \caption{Index Coding Problem with Caching helpers}
	  \label{Fig:IndexCodinghelper}
	 \end{figure}

	  \textbf{Our Contributions:} We treat the special case of the index coding problem in a wireless network, where the side information is available at a few caching helpers that serve nearby users through short range WiFi links (with no cost) and all users can receive data from the base station. We assume a network with one base station (source), $k$ (a constant) helpers and $n$ users.  With interference constraints on helper placement and planar topology, one can reduce a general problem, where users are connected to multiple helpers, to a canonical form in which users are connected to just one helper, while the number of helpers in the resulting `canonical network' is increased by at most a constant factor with respect to the original network. For the equivalent problem reduced to such a ``canonical form'', the problem of finding the optimal XOR coloring scheme, when $k$ is a constant that does not scale with $n$, reduces to a problem of multi coloring on a fixed prototypical graph uniquely determined by $k$ only. The integer programming formulation of such a multi coloring problem yields a optimal XOR coloring scheme computable in time polynomial in the number of users $n$. We further show that vector XOR coloring reduces to fractional multi coloring on the prototypical graph whose computational complexity is polynomial in $n$ when $k$ is a constant ($ 2^{O(k^2)}n$).  For $k = 2$, the XOR coloring code is the optimal linear scalar index code. For $k=3$, XOR coloring is also shown to be the optimal binary scalar linear index code when considering the undirected side information graph obtained by throwing away uni-directed edges. 

	
	\section{Problem Definition}\label{probdefsec}

	    We consider a special instance of the index coding problem involving constant number of helpers holding cached packets that act as side information, 
	    motivated by a wireless setting with a central broadcasting agent, i.e. a base station, serving all users in a cellular network. 
	    Access to side information is provided by helpers with storage, holding previously cached content, and equipped with another wireless 
	    interface to serve the user requests \textit{only} using their cached content. 
	    Every helper has a limited connectivity range and, when a user is within this range, we assume that the helper user communication incurs no cost due to high downloading rates from the helpers. In this sense, side information \textit{available} to a user includes all the cached content of a helper if the user is in the connectivity range of the helper. When we deal with binary scalar coding schemes in this work, packet size is taken to be $1$ bit ($p$=1) without loss of generality since the same scalar scheme is repeated over all the bits of a packet. Formally,  the problem instance can be stated as follows:
	    
	\begin{problem}
	\label{HelperProblem}
	(\textit{Index Coding with Helpers: ICH$(\{C_i,S_i \})$} The problem instance is given by $n$ users, indexed by elements from set ${\cal N}=\{1,2 \ldots n\}$, served by the base station, requesting $n$ different packets $x_1,x_2 \ldots x_n \in \mathbb{F}_2$ respectively. There are $k$ helper nodes denoted by ${\cal H}=\{H_1,H_2\ldots H_k\}$,such that helper $H_i$ contains a subset $S_i \subseteq \{x_1,\ldots, x_n\}$ of the $n$ packets. User requests are not found in its neighboring helpers (\textit{cache misses}). Letting $C_j \subseteq {\cal N}$ denote the set of users connected to helper $H_j$, cache miss condition implies $x_i \notin S_j$ for all $i \in C_j$.  
       \end{problem}
		 	
	\begin{definition}\label{SideInfograph}
	   A \textit{Side Information graph} is a directed graph $G_d(V,E_d)$ where each vertex $i$ corresponds to a user $i$ and there is a directed edge from $i$ to $j$ denoted by the ordered pair $(i,j)$ when user $j$ has packet $x_i$ as side information. This notion holds only for the case when user $i$ requests a single packet $x_i$ and all user requests are for different packets. \hfill $\lozenge$
	\end{definition}
	  Any index coding problem with distinct requests has a directed side information graph characterizing it. The ICDH problem when $k=n$ and $|C_i|=1, ~ \forall i$ is the general index coding problem instance on a directed side information graph.
	
	\begin{definition}\label{UnSideInfograph}
	  An \textit{underlying undirected side information graph}  $G(V,E)$ of a given directed side information graph $G_d$ is obtained by replacing every two directed edges $(i,j)$ and $(j,i)$ by an undirected edge, denoted $\{i,j\}$, and throwing away any directed edge $(i,j)$ that does not have an oppositely directed counterpart, namely $(j,i)$. \hfill $\lozenge$
	\end{definition}
	
We introduce some standard notation for a few graph parameters. Given an undirected graph $G$, $\bar{G}$ denote its complement. $\alpha(G),\omega(G),\bar{\chi}(G),\chi(G)$ are the independence number, clique number, clique cover number and the chromatic number, respectively of $G$. Since an independent set in $G$ is the clique in $\bar{G}$, we have $\chi(\bar{G})= \bar{\chi}(G)$ and $\alpha(G) = \omega(\bar{G})$. In a feasible coloring of an undirected graph, neighboring nodes receive different colors. 
	
	 In this work, we deal with two achievable schemes: 'XOR coloring' scheme which corresponds to the coloring of the complement of the underlying side information graph $G(V,E)$ or clique cover of $G(V,E)$ and the 'vector XOR coloring' scheme which corresponds to the fractional coloring of $\bar{G}$. We defer the definition and treatment of vector XOR coloring scheme until section \ref{sec:AlgvectorXOR}. First, we focus on the XOR coloring scheme. We define the XOR coloring scheme as follows. 
	\begin{definition}
	       (XOR coloring scheme)  XOR coloring scheme of length $t$ for  $G_d(V,E_d)$ is given by:
	       
	\begin{enumerate}       
	     \item   $t$ linear encoding functions:
	       \begin{equation} 
	                y_i = \sum \limits_{\ell=1}^{n}  G_{i,\ell}  x_\ell, ~ \forall ~1\leq i \leq t 
	        \end{equation}
where $G_{i,\ell} \in \mathbb{F}_2$ and addition is over the binary field $\mathbb{F}_2$. For any $\ell$, $G_{i, \ell}$ is non zero for exactly one $i$.       

             \item  $n$ linear decoding function $\phi_i$ such that: 
\begin{equation}
	                         x_i = \widehat{\phi_i} \left( y_1, y_2 \ldots y_t, N_i \right) = \widehat{\phi_i} \left( y_{\ell}, N_i \right)   
\end{equation}    
for that unique $\ell$, such that $G_{i,\ell} \neq 0$. Here, $N(i)$ is the set of side information packets user $i$ has access to, listed by the directed out-neighborhood of vertex $i$ in the graph $G_d$. In other words, user $i$ needs only that encoded transmission in which its desired packet participates to decode.
\end{enumerate}  \hfill $\lozenge$.
	\end{definition}
	
	For the ICH problem, the side information graph $G_d$ is such that $N(i)= \bigcup \limits_{j: i \in C_j} S_j$. In other words, the side information available to user $i$ is the union of all the caches of the helpers it is connected to.
	
	        It is known that 'XOR coloring' scheme corresponds to a clique cover on the underlying undirected side information graph or the coloring of its complement. Consider one encoded transmission $y_{\ell}$. Every user corresponding to the packet that participates in this should form a clique in $G(V,E)$ or an independent set in the complement of $G(V,E)$ for decoding to be feasible. Each color specifies an encoding function $y_\ell$. Hence, $\chi(\bar{G})$ or $\bar{\chi}(G)$ is equal to the length of the optimal 'XOR coloring' code. Coloring a graph is NP-hard in general.	 
	        
           In the rest of this section, we show that an ICH problem instance can always be reduced to an equivalent ``canonical form'' where each user is connected to a single helper.  Under some conditions, the number of helpers in the equivalent canonical form is linear in the number of helpers of the original network. This will allow us, in the rest of the paper, to restrict our treatment to networks in canonical form, with $k$ helpers, where $k$ is a constant with respect to $n$ .
           	
	We reduce the ICH problem to a problem where every user is connected to a single helper,  by introducing virtual helpers defined 
	as follows: for every possible distinct set $S_{i_1} \bigcup \cdots \bigcup S_{i_m} $, create a virtual helper labeled by 
	$H_{i_1 \ldots i_m}^{\cup}$, with side information $S_{i_1 \ldots i_m }^{\cup} = S_{i_1} \bigcup \cdots \bigcup S_{i_m}$ and neighborhood 
	$\bigcap_{j=1}^m C_{i_j}$.  Clearly, there can be at most $2^k$ such virtual helpers. 
	An important property is that every user is connected to exactly 
	one helper. Namely, user $i$ is connected to the unique virtual helper $H_{i_1 \ldots i_m}^{\cup}$ for which 
	$i \in \bigcap_{j=1}^m C_{i_j}$.
	
	After this operation, two virtual helpers might contain the same side information 
	in which case they can be merged into a single virtual helper. 
	We call this \textit{union expansion}. Note that in the generic case, the expansion requires creation of exponential number of virtual helpers. 
	However, in a practical situation arising in wireless networks, 
	given a planar topology for location of users and helpers, and assuming a communication radius  for each helper, 
	and restricting location of helpers such that any user can be connected to at most $d$ helpers ($d$ could be a small number like $3,4$ due to interference 
	management), the number of virtual helpers that one has to consider in the union expansion 
	is just $O(k)$. We will prove this result in the following lemma.
	
	\begin{lemma}
	Consider a set of $k$ circles on a 2-dimensional plane parametrized 
	by their centers ${\cal P}=\{p_1,p_2 \ldots p_k\}$ and radii $\{r_1,r_2 \ldots r_k\}$. 
	Assume that they form a $d$-ply system, i.e., any point in the plane is covered by at most 
	$d$ circles. Consider the family of subsets ${\cal F} \subseteq 2^{\cal P}$ such that $S \in {\cal F}$ if all the circles in $S$ have a non-trivial intersection. We call such a set $S$ as \textit{intersecting set of circles}. Then $\lvert {\cal F}\rvert = O(k)$.  \hfill $\square$
	\end{lemma}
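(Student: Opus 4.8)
The plan is to pass from the purely combinatorial family ${\cal F}$ to the planar arrangement ${\cal A}$ of the $k$ closed disks $D_1,\dots,D_k$ bounded by the given circles, and to control that arrangement using the $d$-ply hypothesis. Write $\mathrm{cov}(q)=\{\,i:\ q\in D_i\,\}$ for the set of disks covering a point $q\in\RR^2$; the $d$-ply assumption is precisely $|\mathrm{cov}(q)|\le d$ for every $q$. First I would record the elementary consequence that drives everything: if $S$ is an intersecting set of circles, then any witness point $q\in\bigcap_{i\in S}D_i$ satisfies $S\subseteq\mathrm{cov}(q)$, whence $|S|\le d$. So every member of ${\cal F}$ is a set of size at most $d$.

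Next I would reduce the count $|{\cal F}|$ to the number $N$ of faces of ${\cal A}$. The map $q\mapsto\mathrm{cov}(q)$ is constant on each open face of ${\cal A}$, so it realizes at most $N$ distinct covering sets. Since every $S\in{\cal F}$ is contained in $\mathrm{cov}(q)$ for its witness point, we obtain ${\cal F}\subseteq\bigcup_{\text{faces }f}\{\,S:\ \emptyset\ne S\subseteq\mathrm{cov}(f)\,\}$, and each face contributes at most $2^{|\mathrm{cov}(f)|}\le 2^{d}$ subsets. Hence $|{\cal F}|\le 2^{d}N$, and the whole lemma is reduced to the geometric estimate $N=O(dk)$.

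The heart of the matter, and the step I expect to be the main obstacle, is the bound $N=O(dk)$. Because ${\cal A}$ is an arrangement of closed curves each pair of which meets in at most two points, Euler's formula gives $N=O(V+k)$, where $V$ is the number of crossing vertices; so it suffices to bound $V$. Here I would invoke the Clarkson--Shor random-sampling technique together with the classical fact that the union of $m$ disks has boundary complexity $O(m)$. Sampling each disk independently with probability $p$, a crossing vertex $v\in\partial D_a\cap\partial D_b$ survives on the boundary of the sampled union exactly when $a,b$ are chosen and none of the $\mathrm{depth}(v)$ disks whose interior contains $v$ is chosen, an event of probability $p^{2}(1-p)^{\mathrm{depth}(v)}$. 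Taking expectations and using the linear union bound yields $\sum_{v}p^{2}(1-p)^{\mathrm{depth}(v)}=O(pk)$. The $d$-ply hypothesis is exactly what makes this useful: every vertex has $\mathrm{depth}(v)\le d$, so choosing $p=1/(d+1)$ makes each factor $(1-p)^{\mathrm{depth}(v)}=\Omega(1)$, and the sum collapses to $V\cdot\Omega(p)=O(pk)$, i.e.\ $V=O(dk)$.

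Combining the three steps gives $|{\cal F}|\le 2^{d}N=O\!\left(2^{d}\,d\,k\right)=O(k)$, since $d$ is a fixed constant. The only delicate points to police are that the ply bound is applied to the interior depth of each vertex (the disks $D_a,D_b$ meet $v$ on their boundaries, not their interiors), that tangencies and other degeneracies can be removed by a generic perturbation that only increases complexity, and that the union-complexity constant together with the $2^{d}$ blow-up are genuinely absorbed into the hidden constant for fixed $d$; the planar bookkeeping behind $N=O(V+k)$ is routine and I would not belabor it.
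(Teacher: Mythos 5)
Your proof is correct, but it takes a genuinely different route from the paper's. The paper stays purely graph-theoretic: it forms the intersection graph of the disks, cites the fact from separator theory (Miller et al.) that a $2$-dimensional $d$-ply system is $9d$-inductive, and then charges each intersecting set $S$ --- which has $|S|\le d$ by the ply bound, the one observation you share --- to its lowest-numbered circle, so the remaining members of $S$ must lie among that circle's at most $9d$ higher-numbered neighbors; this gives $|{\cal F}|\le d\binom{9d}{d-1}k$ directly, with no arrangement machinery. You instead work with the planar arrangement of the disks: you reduce $|{\cal F}|$ to at most $2^d$ times the number of cells, bound the cells by $O(V+k)$ via Euler, and bound the crossing vertices by $V=O(dk)$ using Clarkson--Shor sampling against the linear union complexity of disks (Kedem--Livne--Pach--Sharir); all three steps are sound, including the choice $p=1/(d+1)$, and the degeneracy caveats you flag are indeed routine (one perturbs only to split tangencies and degenerate crossings, which does not decrease the vertex count, and the ply enters only through the depth bound $\mathrm{depth}(v)\le d-2\le d$, which survives a sufficiently small perturbation since coverage by closed disks is already attained at the degenerate point). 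What each approach buys: the paper's argument is shorter and needs only one cited black box, but its constant $d\binom{9d}{d-1}$ is exponentially worse than your $O(2^d d)$; yours costs heavier geometric machinery but yields the sharper constant and transfers verbatim to any family with linear union complexity (e.g.\ pseudo-disks), whereas the inductiveness fact the paper invokes is specific to ply systems of balls. One small bookkeeping point: your witness point $q$ may lie on an edge or vertex of the arrangement rather than in an open $2$-face, so when you say the covering map is constant on open faces you should count relatively open cells of all dimensions; since your $N=O(V+k)$ already accounts for vertices and edges as well as faces, nothing breaks.
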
  
	
	\begin{proof}
	Define the intersection graph of the $d$-ply system as $G=(\Pc,E)$ and $(p_i,p_j) \in E$ if circles with centers $p_i$  and $p_j$ intersect. 
	An undirected graph is said to be $\delta$ inductive if there is a numbering on the vertices such that every vertex is connected to at 
	most $\delta$ vertices which are numbered higher than itself.  It is known \cite{miller1997separators} that a $2$-dimensional $d$-ply system is $9d$ inductive. 
	
	Consider a particular circle whose center is  $p_i$. It intersects with at most $9d$ circles which are numbered higher. Let $S$ be a set of circles such that $S \in {\cal F}$ and $p_i \in S$ and if there is a $p_j \in S, ~ p_j \ne p_i$, then $p_j$ is numbered higher than $p_i$.
	We count the number of such sets $S$. An immediate bound would be $\sum_{i=0}^{d-1} {9d \choose i} \leq d  {9d \choose d-1}$. For every circle $p_i$, we count such intersecting sets $S$ that contain $p_i$ but do not contain any circle numbered lower. In this way, one can count at most ${9d \choose d-1}d k$ sets. We have to show that, in this method of counting, every set  $S \in {\cal F}$ is counted at least once. 
	
	In the above lemma, the center of a circle represents a helper and its radius specifies its connectivity range. A $d$-ply system implies that no user can be connected to more than $d$ helpers.
	
	Consider $S \in {\cal F}$. Let circle $p$ have the lowest numbering. This set will be included when counting intersecting sets of circles containing the circle $p$ and no other circle numbered lower.  Hence $\lvert {\cal F} \rvert \leq {9d \choose d-1}d k$. Hence $\lvert {\cal F} \rvert = O(k)$.
	\end{proof}
	
	Note that, although the coefficient $d {9d \choose d-1}$ may be large, the actual number of virtual helpers under union expansion is usually  considerably smaller since some regions of intersection may not contain any users for a given instance of the problem and the side information corresponding to multiple virtual helpers may be identical and hence could be merged further. Due to the above result, it is sufficient (in an order sense with respect to the number of helpers) to treat the problem assuming that a user is connected to exactly $1$ helper. We can re-define our network assuming a constant number (say $k$) of helpers such that their neighborhoods $C_j$ are pairwise disjoint. Since, a user request is not found in the cache of a helper connected to the user, we have $C_j \bigcap S_j = \emptyset$ . There is a slight abuse of notation because $C_j$ is a subset of users and $S_j$ is a subset of packets. From now on, a user and its requested packet is synonymous. Now, we define a variant of Problem \ref{HelperProblem} (ICH) where the network is in canonical form.
		
	\begin{problem}\label{Indexkhelper}
	   (\textit{Index Coding with Disjoint Helpers: ICDH($\{C_i,S_i\}_{i=1}^k$)}) An ICH problem instance $\{C_j,S_j\}$ for $1 \leq j \leq k$ with additional conditions: $\{C_j\}$ pairwise disjoint and $C_j \bigcap S_j = \emptyset$. 
	  \end{problem}
	
	 We are interested in the complexity of computing the optimal 'XOR coloring' scheme for ICDH when $k$ (number of helpers) is a constant. For this we need to analyze some structural properties of $G_d(V,E_d)$ corresponding to the ICDH problem. Lets start with a simple case when $k=2$. 
	\begin{figure}
	 \centering
	  \includegraphics[width=3cm]{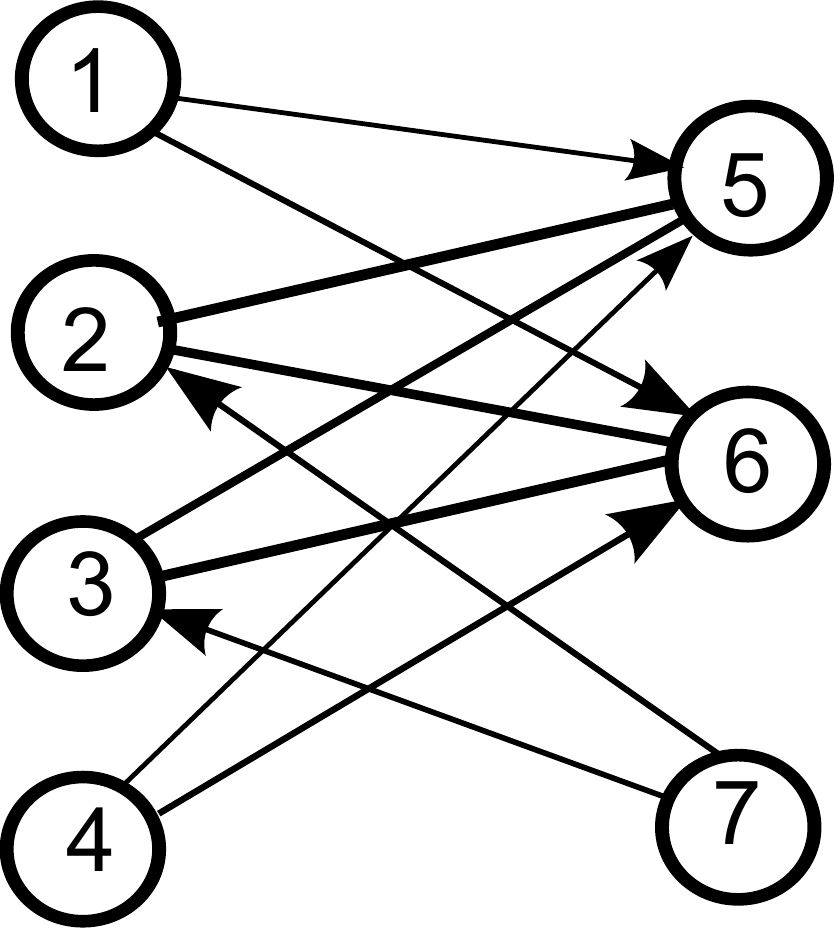}
	  \caption{ $k=2$ helpers. $C_1=\{1,2,3,4 \}, ~ C_2=\{5,6,7\}$. $S_1=\{5,6\}$,$S_2=\{2,3\}$. There is an undirected edge $u(2,5)$ between $2$ and $5$ which represents two directed edge $(2,5)$ and $(5,2)$ in opposite directions. The directed edge $(4,6)$ signifies that user $4$ can see packet $x_6$ due to helper $1$ containing it as side information.}
	  \label{Fig:2helper}
	 \end{figure}    
	 
	\section{$k=2$ helpers case}\label{helper2sec}
	            We consider the ICDH problem for $k=2$. The problem consists of two disjoint sets of users $C_1,C_2$ connected to helpers $1$ and $2$ respectively. Side information contained in helper $1$ is given by $S_1$ and side-information contained in helper $2$ is given $S_2$. To visualize this, let us consider a directed side-information graph, which in this case, is a bipartite graph $G_d(C_1 \bigcup C_2,E_d)$. Since $C_1 \cap S_1 = \emptyset$ and $C_2 \cap S_2 = \emptyset$,  there are directed edges only between the partitions $C_1$ and $C_2$. If $j \in C_2$ and $j \in S_1$, there is a directed edge from every vertex in $C_1$ to the vertex $j$ in $C_2$ signifying that all users in $C_1$ have access to packet $x_j$ desired by user $j$. Two directed edges in opposite directions, i.e. $(i,j)$ and $(j,i)$ are replaced by an undirected edge denoted by $\{i,j\}$. With some abuse of notation, let $(i,j) \in E_d$ represent the directed edge from $i$ to $j$ and by $\{i,j\} \in E_d$ we mean the undirected version. With an example, this is illustrated in  Fig. \ref{Fig:2helper}.
	            
	 We have the following lemma regarding the optimum binary scalar linear index coding solution when $k=2$.
	 \begin{lemma}\label{lem:2helpdecomp}
	                When $k=2$, and each user is connected to exactly one helper, the optimum binary scalar linear code is given by the XOR coloring scheme and its length is $| C_1|+| C_2 | -  \min \{ | S_1 \cap C_ 2 |, | S_2 \cap C_1 | \}$.  \hfill $\square$
	\end{lemma}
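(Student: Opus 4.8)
The plan is to pin down the undirected side information graph $G$ exactly and then trap the optimal binary scalar linear length (the minrank over $\mathbb{F}_2$, by the prior-work characterization) between a clique-cover upper bound and a maximum-acyclic-induced-subgraph (MAIS) lower bound that will turn out to coincide. First I would read off the structure of $G$. Since the lemma assumes every user is connected to exactly one of the two helpers, $\mathcal{N}=C_1\cup C_2$, and the canonical-form cache-miss conditions $C_1\cap S_1=\emptyset$, $C_2\cap S_2=\emptyset$ force $S_1\subseteq C_2$ and $S_2\subseteq C_1$. For $i\in C_1$ and $j\in C_2$, the directed edge $(j,i)$ (user $i$ sees $x_j$) is present iff $j\in S_1$, and $(i,j)$ (user $j$ sees $x_i$) is present iff $i\in S_2$; hence the undirected edge $\{i,j\}$ appears exactly when $i\in S_2\cap C_1$ and $j\in S_1\cap C_2$, and the cache-miss condition rules out any edge inside $C_1$ or inside $C_2$. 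Writing $a=|S_2\cap C_1|$ and $b=|S_1\cap C_2|$, this shows $G$ is the disjoint union of a complete bipartite graph $K_{a,b}$ on the parts $S_2\cap C_1$ and $S_1\cap C_2$ together with $|C_1|+|C_2|-a-b$ isolated vertices.

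Next I would compute the XOR coloring length, which equals the clique cover number $\bar{\chi}(G)$. Since $K_{a,b}$ is triangle-free, every clique of $G$ has size at most two, so a minimum clique cover of the bipartite part pairs vertices along a maximum matching and covers the rest by singletons; by K\"onig's theorem the maximum matching has size $\min(a,b)$, giving $\max(a,b)$ cliques there, and each isolated vertex contributes one singleton. Summing yields $\bar{\chi}(G)=(|C_1|-a)+(|C_2|-b)+\max(a,b)=|C_1|+|C_2|-\min(a,b)$. As clique cover is a valid binary scalar linear scheme, this is an upper bound on the minrank.

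The heart of the argument, and the step I expect to be the main obstacle, is a matching lower bound. I would invoke the MAIS bound: the optimal scalar linear length is at least the number of vertices of any induced acyclic subgraph, because the corresponding principal submatrix of any fitting matrix is permutation-triangular with nonzero diagonal and hence has full rank. The key observation is that every directed edge of $G_d$ has its tail in $(S_2\cap C_1)\cup(S_1\cap C_2)$, so any directed cycle is confined to that set, where the two parts are joined by all $2$-cycles. A feedback vertex set must therefore be a vertex cover of $K_{a,b}$ and so has size at least $\min(a,b)$, while deleting the smaller side removes every cycle; hence the minimum feedback vertex set has size exactly $\min(a,b)$ and the MAIS has $|C_1|+|C_2|-\min(a,b)$ vertices.

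Combining the two bounds gives $|C_1|+|C_2|-\min(a,b)\le \text{MAIS}\le \text{minrank}_{\mathbb{F}_2}\le \bar{\chi}(G)=|C_1|+|C_2|-\min(a,b)$, so all quantities agree, the common value is $|C_1|+|C_2|-\min\{|S_1\cap C_2|,|S_2\cap C_1|\}$, and the XOR coloring scheme is optimal. The delicate point throughout is the bookkeeping that the single quantity $\min(a,b)$ simultaneously controls the clique cover (through a maximum matching) and the feedback vertex set (through a minimum vertex cover), which is precisely K\"onig duality specialized to the complete bipartite core of $G$; once the graph is shown to have this clean $K_{a,b}$-plus-isolated-vertices form, the rest is routine.
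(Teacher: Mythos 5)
Your proposal is correct, and it reaches the lemma by a genuinely different route than the paper. Both arguments rest on the same structural observation --- the undirected side information graph is a complete bipartite graph $K_{a,b}$ on $S_2\cap C_1$ and $S_1\cap C_2$ (with $a=|S_2\cap C_1|$, $b=|S_1\cap C_2|$) together with isolated vertices --- but the paper packages it as a one-way decomposition $G_1 \rightarrow G_2$, where $G_1$ is the edgeless graph on $C_1^{\mathrm{out}}\cup C_2^{\mathrm{out}}$ and $G_2 = K_{C_1\cap S_2,\,C_2\cap S_1}$, and then imports two external facts: the additivity $\ell(G_d)=\ell(G_1)+\ell(G_2)$ for such decompositions from Bar-Yossef et al., and perfection of bipartite graphs to force $\alpha(G_2)=\bar{\chi}(G_2)$ inside the sandwich $\alpha \le \ell \le \bar{\chi}$. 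You dispense with the additivity lemma entirely: your global MAIS/feedback-vertex-set argument on the full directed graph $G_d$ absorbs the directed-only edges in one stroke, since every vertex on a directed cycle must be the tail (equivalently, under the paper's other stated orientation convention, the head) of a cycle edge and hence lies in $S_1\cup S_2$, confining all cycles to the bidirected $K_{a,b}$; König duality (trivial for complete bipartite graphs) then makes the same quantity $\min(a,b)$ govern both your matching-based clique cover and your minimum vertex cover, closing the sandwich $n-\min(a,b)\le \mathrm{MAIS}\le \mathrm{minrank}_{\mathbb{F}_2}\le \bar{\chi}(G)=n-\min(a,b)$. What each approach buys: the paper's decomposition viewpoint is the one that generalizes to $k\ge 3$ (it is reused verbatim in Section IV to isolate $G_2^u$ and its complete bipartite structure), whereas your argument is more self-contained and in fact proves something slightly stronger than claimed, since the MAIS bound lower-bounds the general broadcast rate $\beta$, not merely binary scalar linear codes --- so your proof certifies that XOR coloring is optimal among all index codes for the $k=2$ instance, a fact the paper's proof does not directly yield.
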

	 
	 \begin{proof}
	    Let $C_1^{\rm{out}} \subseteq C_1$ be the set of users which are not in $S_2$. Similarly, let $C_2^{\rm{out}} \subseteq C_2$ be the users whose packets are not present as side information in helper $1$. Observe that component induced by $C_1^{\rm{out}} \cup C_2^{\rm{out}}$ has no directed/undirected edges and edges are only directed out of this component in the original graph.  Let $E\rq{}$ be the set of edges that are directed out from the component induced by $C_1^{\rm{out}} \cup C_2^{\rm{out}}$. Hence one can decompose the original bi-partite graph into two parts as follows: $G_1(C_1^{\rm{out}} \cup C_2^{\rm{out}}, \emptyset)  \rightarrow G_2(C_1\backslash C_1^{\rm{out}} \cup C_2\backslash C_2^{\rm{out}}, E_d \backslash E\rq{})$ where the \lq{}$\rightarrow$\rq{} indicates that there are no edges coming into $G_1$ from $G_2$ but possibly there are edges in the other direction. An example is given in Fig. \ref{Fig:2helper} for illustration.  The decomposition for the example given in Fig. \ref{Fig:2helper} is provided in Fig. \ref{Fig:2helperdecomp}. In the example, $C_1^{\rm{out}} = \{1,4\}$ and $C_2^{\rm{out}}=\{7\}$.
	    
	     Consider the graph decomposed as $G_1 \rightarrow G_2$.  The second component $G_2$ is a complete bipartite graph $K_{ C_1 \cap S_2 ,  C_2 \cap S_1  }$ with only undirected edges where $K_{M,N}$ denotes a complete bi-partite graph between disjoint vertices of the set $M$  and disjoint vertices of the set $N$.  In other words, in the remaining graph $G_2$, given any two vertices $i,j$ from different partitions, there is an undirected edge $\{i,j\} \in E_d \backslash E\rq{}$. This is because, every vertex in one partition is contained in the side information set of the other partition. Otherwise that vertex will belong to $C_1^{\rm{out}} \cup C_2^{\rm{out}}$ since it would be a vertex not \lq{}seen\rq{} by the other partition.
	     
	     For a complete bi-partite graph, the independent number $\alpha(G_2)$ and the clique cover number $\bar{\chi}(G_2)$ are identical because a bi-partite graph is a perfect graph \cite{bollobas}. In this case, the independence number equals the size of the maximum of the two partitions , i.e. $\max \{  | S_1 \cap C_ 2 |, | S_2 \cap C_1 |  \}$.  It is known that for an undirected side information graph $G$, $\alpha(G) \leq \ell(G) \leq \bar{\chi}(G)$ where $\ell(.)$ is the length of the optimum binary scalar linear index code of the undirected graph $G$. Hence, $\ell(G_2) = \alpha(G_2)$. From the decomposition of $G_1 \rightarrow G_2$, since no vertex in $G_1$ is contained as side information in $G_2$, it is known that \cite{bar2011index} $\ell(G_d)=\ell(G_1)+\ell(G_2)$ where $\ell (G_d)$ is the optimal scalar linear index code for $G_d$. In this case, $\ell(G_1)= | G_1 | = | C_1^{\rm{out}} \cup C_2^{\rm{out}} |$ since $G_1$ has no edges.  Hence, the result follows. In the example given in Fig. \ref{Fig:2helperdecomp}, the transmissions that can be saved corresponds to the complete bipartite graph between $\{2,3\}$ on one side and $\{5,6\}$ on the other. One could transmit $x_2 + x_5$ and $x_3+x_6$ or $x_2+x_6$ and $x_3+x_5$. This corresponds to the XOR coloring solution of the bipartite graph	 
	 \end{proof}
	 \begin{figure}
	 \centering
	  \includegraphics[width=6cm]{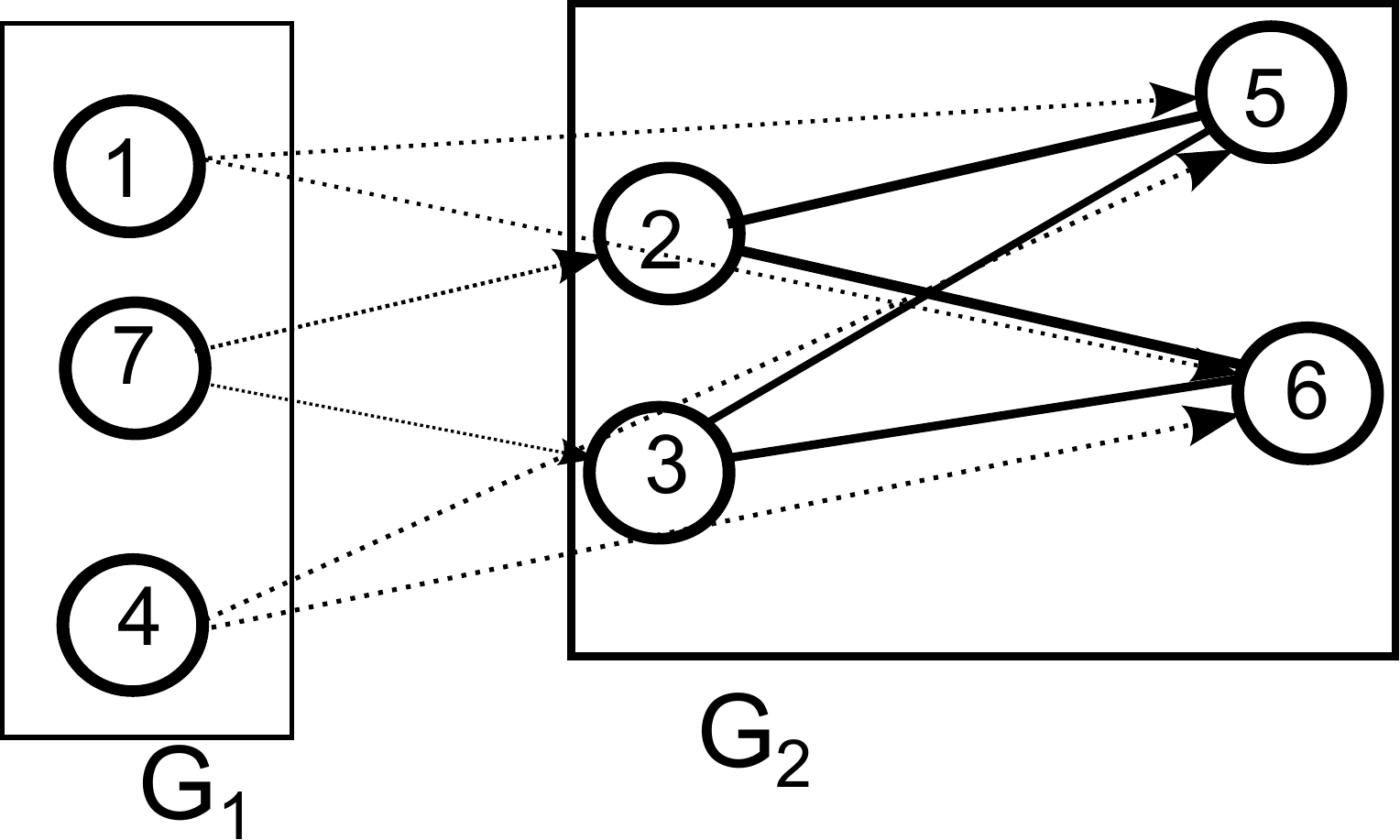}
	  \caption{The original graph from Fig. \ref{Fig:2helper} decomposed as in the proof of Lemma \ref{lem:2helpdecomp}}.
	  \label{Fig:2helperdecomp}
	 \end{figure}  
 In a nutshell, for $k=2$, the bi-partite graph exhibits a special bi-partite structure that makes the XOR coloring solution the optimal binary scalar linear index coding solution. Using the ideas from this section, we now investigate the structure of the side-information graph for the case $k \geq 3$.	 
	  
	 \section{$k \geq 3$ Helpers case}\label{helper3sec}
	  For a given $k$, the ICDH problem input consists of neighborhoods $\{C_i\}_{i=1}^k$ and the side information sets $\{S_i\}_{i=1}^k$ such that $C_i \bigcap C_j =\emptyset$. In this case, the
	 side information graph is given by a directed $k$-partite graph $G_d(\bigcup \limits_{i} C_i, E_d)$. As before , an edge is undirected when edges in both directions are present between two vertices. We isolate vertices which are not contained in any helper into $\bigcup \limits_{i} C_i^{\rm{out}}$. This allows us to define a very similar decomposition: $G_1\left(\bigcup \limits_{i} C_i^{\rm{out}}, \emptyset\right) \rightarrow G_2\left(\bigcup \limits_{i} C_i \backslash \bigcup \limits_{i} C_i^{\rm{out}}, E_d \backslash E\rq{}\right) $ as before where $E\rq{}$ consists of edges directed from $G_1$ into $G_2$ as before. The decomposition for $k=3$ case is illustrated in Fig. \ref{Fig:helper3}.
	 
	 A new feature is that, in the second graph $G_2$, there are directed edges too. In the example given in Fig. \ref{Fig:helper3}, directed edges appear in $G_2$. $(4,5)$ exists but there is no directed edge from partition $2$ to node $4$. In particular, edge $(5,4)$ is not present. We are interested in $\chi(\bar{G})$ where $G$ is the underlying undirected side information graph corresponding to the problem instance $G_d$. $G$ is the disjoint union of $G_1^u$ and $G_2^u$ where $G_i^u$ is the undirected version of $G_i$ obtained by retaining only undirected edges. Observe, that there are no undirected edges between $G_1$ and $G_2$. Hence, $\chi \left(\bar{G} \right)= \chi \left( \bar{G}_1^u \right) +\chi \left(\bar{G}_2^u \right)$. Since $G_1^u$ has no edges it is sufficient to color $\bar{G}_2^u$. In fact, let us assume $C_i^{\rm{out}}=\emptyset,~\forall i$ because even if they are not empty they are collected in the edgeless graph $G_1^u$. Then, $G_2^u$ has the following structure:
	 \begin{lemma}\label{lem:Completebip}
	 The vertex induced subgraph of $G^u_2$, induced by $C_i$ and $C_j $ (partition $i$ and $j$ in $G^u_2$), has a complete bipartite graph $K_{C_i \cap S_j,  C_j \cap S_i}$ and no other edges. Some vertices which are not part of the bi-partition, but present in the induced sub-graph, have degree $0$ (no edges incident on them). 
	\end{lemma}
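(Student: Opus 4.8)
The plan is to obtain the lemma as a direct translation of the edge semantics of $G_d$, so that the argument is essentially definitional rather than computational. First I would fix two distinct partitions $C_i$ and $C_j$ and recall what edges mean in the canonical form. By Definition \ref{SideInfograph} together with $N(a)=S_i$ for $a\in C_i$, a directed edge $(a,b)$ is present exactly when the packet $x_a$ is cached in the helper seen by $b$; for $a\in C_i$ and $b\in C_j$ this reads $a\in S_j$, and symmetrically $(b,a)\in E_d$ iff $b\in S_i$. I would also record that there are no edges inside a single partition: an edge $(a,a')$ with $a,a'\in C_i$ would force $a\in S_i$, contradicting the cache-miss condition $C_i\cap S_i=\emptyset$. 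Hence the subgraph of $G_2^u$ induced by $C_i\cup C_j$ can only carry edges running between $C_i$ and $C_j$.

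The key step is to characterize the undirected edges. Since $G_2^u$ retains an edge only when it is present in both directions, an undirected edge $\{a,b\}$ with $a\in C_i$, $b\in C_j$ exists iff both $(a,b)$ and $(b,a)$ lie in $E_d$, i.e. iff $a\in S_j$ and $b\in S_i$, equivalently $a\in C_i\cap S_j$ and $b\in C_j\cap S_i$. This condition factorizes: it constrains $a$ and $b$ separately and imposes nothing on the pair jointly. Therefore every vertex of $C_i\cap S_j$ is joined to every vertex of $C_j\cap S_i$, and no other undirected edges occur, which is precisely the complete bipartite graph $K_{C_i\cap S_j,\,C_j\cap S_i}$ asserted in the statement.

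Finally I would account for the remaining vertices. Any $a\in C_i\setminus S_j$ has $a\notin S_j$, so no directed edge leaves $a$ toward $C_j$ and in particular no undirected edge touches $a$ within the $C_i\cup C_j$ restriction; combined with the absence of intra-partition edges this makes $a$ isolated in the induced subgraph, and symmetrically for $b\in C_j\setminus S_i$. I do not expect a genuine obstacle here, as each step follows from the definitions; the only point needing care is bookkeeping. One must keep the direction convention of Definition \ref{SideInfograph} straight, and must remember that such a degree-$0$ vertex can still be non-isolated in the full graph $G_2^u$ precisely because it is seen by a third partition $C_m$ (which is why it was not swept into the edgeless component $G_1$). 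That visibility simply does not surface as an edge once we restrict to the pair $C_i,C_j$, and this is exactly what the closing sentence of the statement records.
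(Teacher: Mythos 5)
Your proof is correct and follows essentially the same route as the paper: the paper's proof simply defers to the two-helper argument of Lemma \ref{lem:2helpdecomp}, whose core is exactly the factorization you spell out --- an undirected edge $\{a,b\}$ with $a \in C_i$, $b \in C_j$ exists iff $a \in S_j$ and $b \in S_i$, two conditions that constrain the endpoints separately, while the cache-miss condition $C_i \cap S_i = \emptyset$ rules out intra-partition edges. You merely inline that argument instead of citing it, and your closing remark about degree-$0$ vertices surviving in $G_2^u$ because of a third partition matches the paper's own observation verbatim in substance.
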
  
	\begin{proof}
	          Since the induced sub-graph is between user indices belonging to the neighborhood of two helpers and only undirected edges are considered, the result follows from the structure of the undirected side information graph for $k=2$ helpers. The argument is found in the proof for Lemma \ref{lem:2helpdecomp}. Note that for the case of $k=2$ helpers, $G^u_2$ was just a complete bipartite graph between partitions $1$ and $2$. The same argument extends to partition pairs $i$ and $j$ when $k>2$. When $k>2$, for the subgraph of $G^u_2$ induced by the vertices in the partition $i$ and $j$, there may be additional nodes in either partition that have degree $0$. The extra nodes may arise because nodes connect to other helper partitions but not $i$ and $j$.  For illustration, compare and contrast $G_2$ in Fig. \ref{Fig:2helperdecomp} and Fig. \ref{Fig:helper3}. In Fig. \ref{Fig:helper3}, node $4$ participates in the second $G_2^u$. However, considering the subgraph induced by the first two partitions, it has degree $0$.  
	\end{proof}
	
        Henceforth, we use the phrase \textit{complete bipartite structure} to denote this property of $k$-partite graphs. For the remainder of this work, we focus on coloring a $k$-partite undirected graph $G^u$ in which every vertex has degree at least $1$ and has the complete bipartite structure between any two partitions. 

	  \begin{figure}
	 \centering
	  \includegraphics[width=11cm]{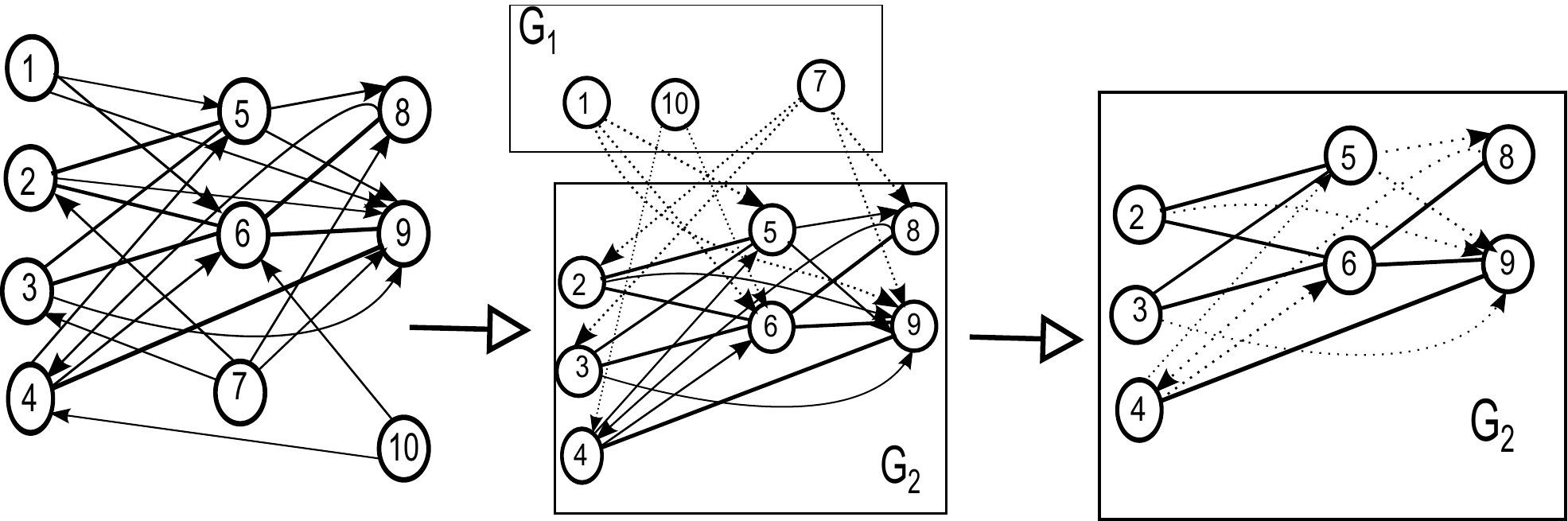}
	  \caption{Decomposition for $k=3$ helpers case. Notice that in $G_2$ there are directed edges. In $G_2^u$, any two partitions induce a complete bipartite graph.}
	  \label{Fig:helper3}
	 \end{figure}     
	 
	 \textbf{Remark:} Complement of a tripartite undirected graph is a triclique. It is known that coloring a triclique is  NP-complete \cite{venkatesan1996approximate}.  Hence, for a general $3$-partite graph coloring the complement is hard. However, we will show that the complete bipartite structure helps us find an algorithm that runs in time $\rm{poly}(n)$ for constant $k$.             
	 
               \section{Category Graph and Multicoloring}\label{categorysec}
	        	  
	           The complete bipartite structure motivates grouping vertices into bins each of which has a label. Consider a set of partitions $\{i_1,i_2 \ldots i_p\} \subseteq \{2,3,4 \ldots k\}$ ($p \leq k-1$). Let $\{j_1,j_2 \ldots j_{k-1-p}\} =\{ i_1,i_2 \ldots i_p\}^c $, where the complement is with respect to the set $[k]-\{1\}$ and $[k]$ denotes the set of numbers from $1$ to $k$. A \textit{category}, denoted by the label $V_{1 \rightarrow i_1i_2 \ldots i_p j_1^cj_2^c \ldots j^c_{k-1-p} }$,  contains all vertices in partition $1$ which are connected to at least one node in each partition belonging to $\{i_1 \ldots i_p\}$ but not connected to any node in any partition belonging to $\{j_1 \ldots j_{k-p-1} \}$. 
	
	There are $ k 2^{k-1}$ such categories with labels $\{V_{i \rightarrow \mathbf{p}\mathbf{s}^c } \}, \forall i \in [k], \forall  \mathbf{p},\mathbf{s}: \mathbf{p}+\mathbf{s} = [k]-\{i\} , \mathbf{p} \bigcap \mathbf{s} = \emptyset $. Here, with some abuse of notation, $\mathbf{p}$ is a string of indices (order within the string does not matter) which corresponds to the subset of indices indicated by $\mathbf{p}$ and string $\mathbf{s}$ represents the subset of indices which occur with a complement in the label for the category. $'+'$ means string concatenation (or equivalently disjoint union).  Of these, categories of the form $V_{i \rightarrow \mathbf{s}^c}$ where $\mathbf{s} = [k]-\{i\}$ have zero vertices because these are vertices with degree $0$ and hence are removed prior to coloring by assumption. Hence, there can be potentially $k(2^{k-1}-1)$ categories.      
	    
	         Because of the complete bipartite structure between the partitions, the categories have the following properties:
        \begin{enumerate} 	         
	        \item  All vertices in a given category are equivalent (they are connected to the same set of vertices in the rest of the graph) with respect to connections in $G^u$.
	        \item There is a complete bipartition between vertices belonging to a category with label $V_{i\rightarrow j \ldots}$ and vertices belonging to a category with label $V_{j \rightarrow i \ldots}$. Here, $\ldots$ represents any valid string of indices (both substitutions can be different) that could be substituted in both category labels. There are no edges between vertices belonging to any other pair of categories. 
 	\end{enumerate}
	
	The properties above motivate the following definition:        
	        \begin{definition}
	            A \textit{category graph} ${\cal G}\left( {\cal V}, {\cal E} \right)$ is an undirected graph with ${\cal V}$ containing all category labels and an edge between labels if and only if all vertices from both categories participate in a complete bipartition in $G^u$.  For each category label vertex $v$, let the weight $w(v)$ be the number of vertices which belong to that category. 
	       \end{definition} 
	        
	        The above definition is specific to $k$-partite graphs with complete bipartite structure between partitions. Note that ${\cal G}$, except for the weights $w(.)$, is fully determined by $k$ and therefore independent of the ICDH problem input.  We have $\sum_{v \in {\cal V}} w(v) =n$. For illustration, category graph for $k=3$ is given in Fig. \ref{Fig:CategoryGraph}.
	 
	        One can expand the category graph by replacing every category node $v$ by a graph of $w(v)$ disconnected vertices and connect every vertex in category $v$ to every vertex in category $u$ if and only if $(u,v) \in {\cal E}$. This expansion results in the original undirected $k$-partite graph $G^u$. Let us consider the problem of coloring $\bar{G}^u$. We have the following definition and lemma that characterizes coloring $\bar{G^u}$ in terms of multi coloring the complement of the category graph. 
	
	\begin{definition}
	    An $r$-multi coloring of an undirected $G({\cal V},{\cal E})$ with a positive integral weight function $w:V\rightarrow {\mathbb Z}^{+}$ is assigning $r$ colors in total such that $w(v)$ different colors are assigned to vertex $v$ and if $(u,v) \in {\cal E}$ then colors assigned to $u$ and $v$ must all be different. \hfill $\lozenge$
	\end{definition}          
	\begin{theorem}\label{Thm:Multicoloring}
	   Optimal coloring of the complement of a $k$ partite undirected graph $G^u$, with a complete bi-partite structure between any two partitions, is equivalent to optimal multi coloring of the complement of its category graph, i.e. $\bar{G}({\cal V},{\cal E})$  
	\end{theorem}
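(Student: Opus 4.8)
The plan is to prove that the two optimal values coincide by exhibiting a value-preserving correspondence between clique covers of $G^u$ and multi-colorings of $\bar{\cal G}$. First I would recall that coloring $\bar{G}^u$ is exactly the clique-cover problem on $G^u$: a color class of $\bar{G}^u$ is an independent set of $\bar{G}^u$, hence a clique of $G^u$, so $\chi(\bar{G}^u)=\bar{\chi}(G^u)$. It therefore suffices to show that the minimum number of cliques needed to cover $G^u$ equals the minimum number of colors $r$ in an $r$-multi-coloring of $\bar{\cal G}$ with the weights $w(\cdot)$.

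The crucial structural step is to describe the cliques of $G^u$ in terms of the category graph. Since $G^u$ is $k$-partite and the categories refine the partitions, a clique of $G^u$ contains at most one vertex of each category (two vertices of the same category lie in the same partition and are non-adjacent). Using property (1), adjacency between two vertices depends only on their categories; using property (2), two vertices from different categories are adjacent precisely when their category labels are joined by an edge of ${\cal G}$, and in that case \emph{all} such pairs are adjacent (complete bipartition). Consequently a set of vertices is a clique of $G^u$ if and only if it consists of one vertex chosen from each category in some clique $Q$ of the category graph ${\cal G}$. This is the bijection I will rely on: cliques of $G^u$ correspond to a clique $Q$ of ${\cal G}$ together with a choice of one representative vertex per category of $Q$.

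With this in hand I would set up both directions. Given a clique cover of $G^u$, which may be taken to be a clique partition (a vertex can be dropped from all but one clique containing it without destroying the clique property), each clique maps to a clique $Q_i$ of ${\cal G}$, and each category $v$ has its $w(v)$ vertices distributed among exactly $w(v)$ distinct cliques of the partition. Coloring clique $K_i$ with color $i$ and assigning to category $v$ the color set $\{i : v\in Q_i\}$ gives a valid $r$-multi-coloring of $\bar{\cal G}$: category $v$ receives exactly $w(v)$ colors, and the categories sharing any color $i$ lie in $Q_i$, a clique of ${\cal G}$, hence an independent set of $\bar{\cal G}$. Conversely, given an $r$-multi-coloring of $\bar{\cal G}$, the categories receiving a fixed color $c$ form an independent set of $\bar{\cal G}$, i.e. a clique $Q_c$ of ${\cal G}$; bijecting the $w(v)$ vertices of each category $v$ with its $w(v)$ colors and, for each color $c$, grouping the chosen representatives of the categories in $Q_c$ yields one clique of $G^u$ per color, and these together cover all $n$ vertices. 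Both maps preserve the number of cliques/colors, so the optima are equal.

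The main obstacle I anticipate is the careful bookkeeping in the structural step: verifying rigorously, from properties (1) and (2), that no clique of $G^u$ can use two vertices from categories that are non-adjacent in ${\cal G}$, and that every selection of one vertex per category of a clique of ${\cal G}$ really is a clique of $G^u$ (the complete-bipartition property between adjacent categories is exactly what makes this selection-independent). The remaining subtlety is the reduction of an arbitrary clique cover to a clique partition, so that each category is touched by exactly, rather than merely at least, $w(v)$ cliques, which is what matches the exact-$w(v)$ requirement in the definition of multi-coloring.
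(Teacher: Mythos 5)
Your proposal is correct and is essentially the paper's own argument in dual form: the paper colors $\bar{G}^u$ directly, noting that the $w(v)$ vertices of a category form a clique in the complement (forcing $w(v)$ distinct colors) and that two vertices in different categories are adjacent in $\bar{G}^u$ exactly when their labels are adjacent in $\bar{\cal G}$, which is precisely your cliques-of-$G^u$ versus cliques-of-${\cal G}$ correspondence restated through $\chi(\bar{G}^u)=\bar{\chi}(G^u)$. Your write-up simply makes explicit the bookkeeping (refining a clique cover to a partition, one representative per category) that the paper's short proof leaves implicit, so there is no gap.
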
   
	\begin{proof}
	        In the complement of $G^u$, two vertices belonging to the same category will be connected to each other. This implies that the vertices belonging to a single category will form a clique in the complement of $G^u$. This introduces the weight function and multi coloring constraint on every category node on the complement graph $\bar{G}({\cal V},{\cal E})$. Two vertices in different categories will be connected in $\bar{G}$ if and only if the corresponding category labels are connected in $\bar{{\cal G}}$. This introduces multi coloring constraints across different categories in the complement of the category graph. 
	 \end{proof}
	
	  We now exhibit another special feature of $G^u$ with complete bipartite structure property when $k=3$. We show that the graph is perfect when $k=3$. We begin with some results about perfect graphs.
	
	\begin{figure}
	 \centering
	  \includegraphics[width=9cm]{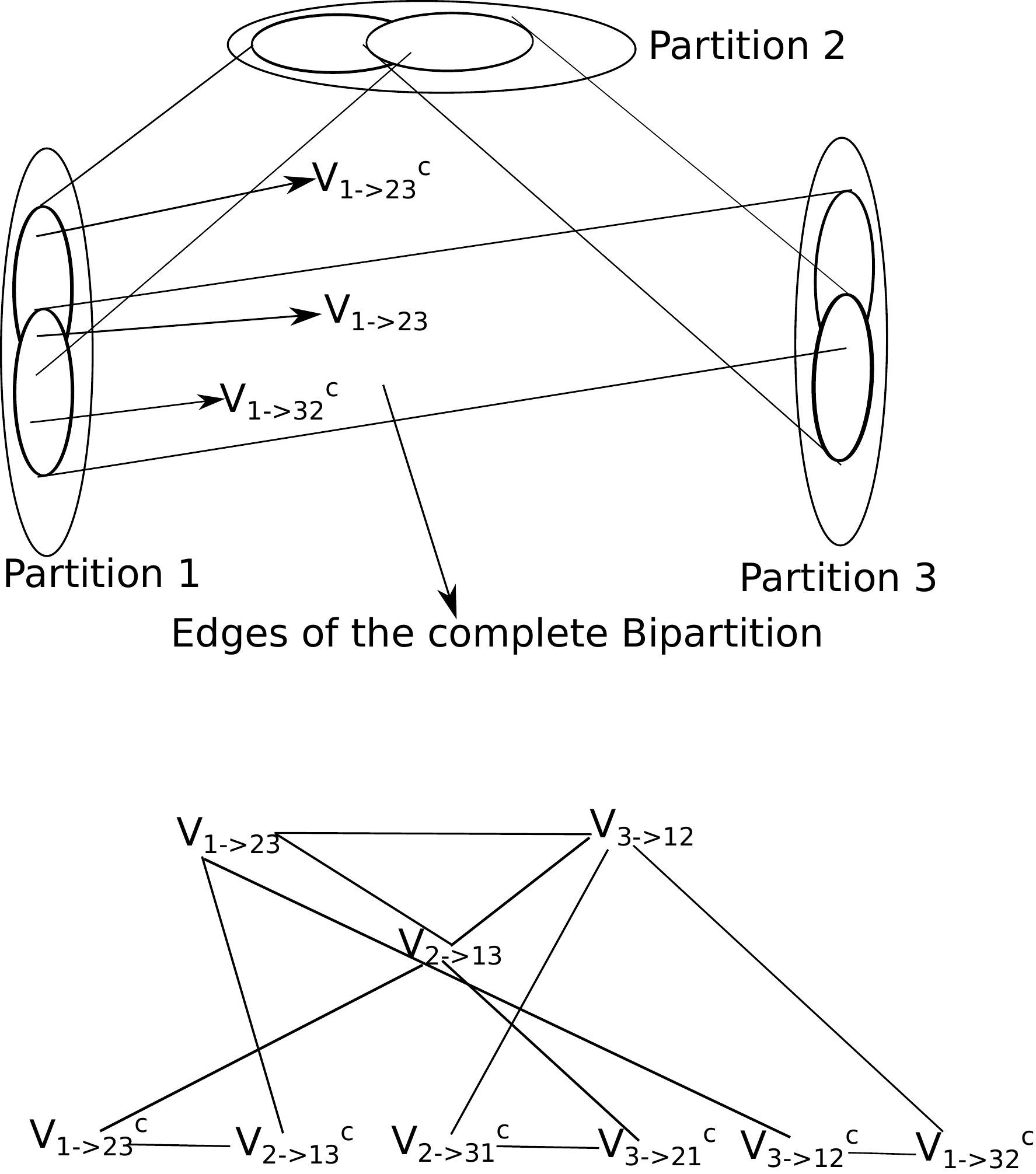}
	  \caption{1. An undirected $3-$ partite graph $G^u$ considered in this work between partitions characterized by different helpers and 2. The category graph of the original graph}
	  \label{Fig:CategoryGraph}
	 \end{figure}
	
	\begin{figure}
	 \centering
	  \includegraphics[width=9cm]{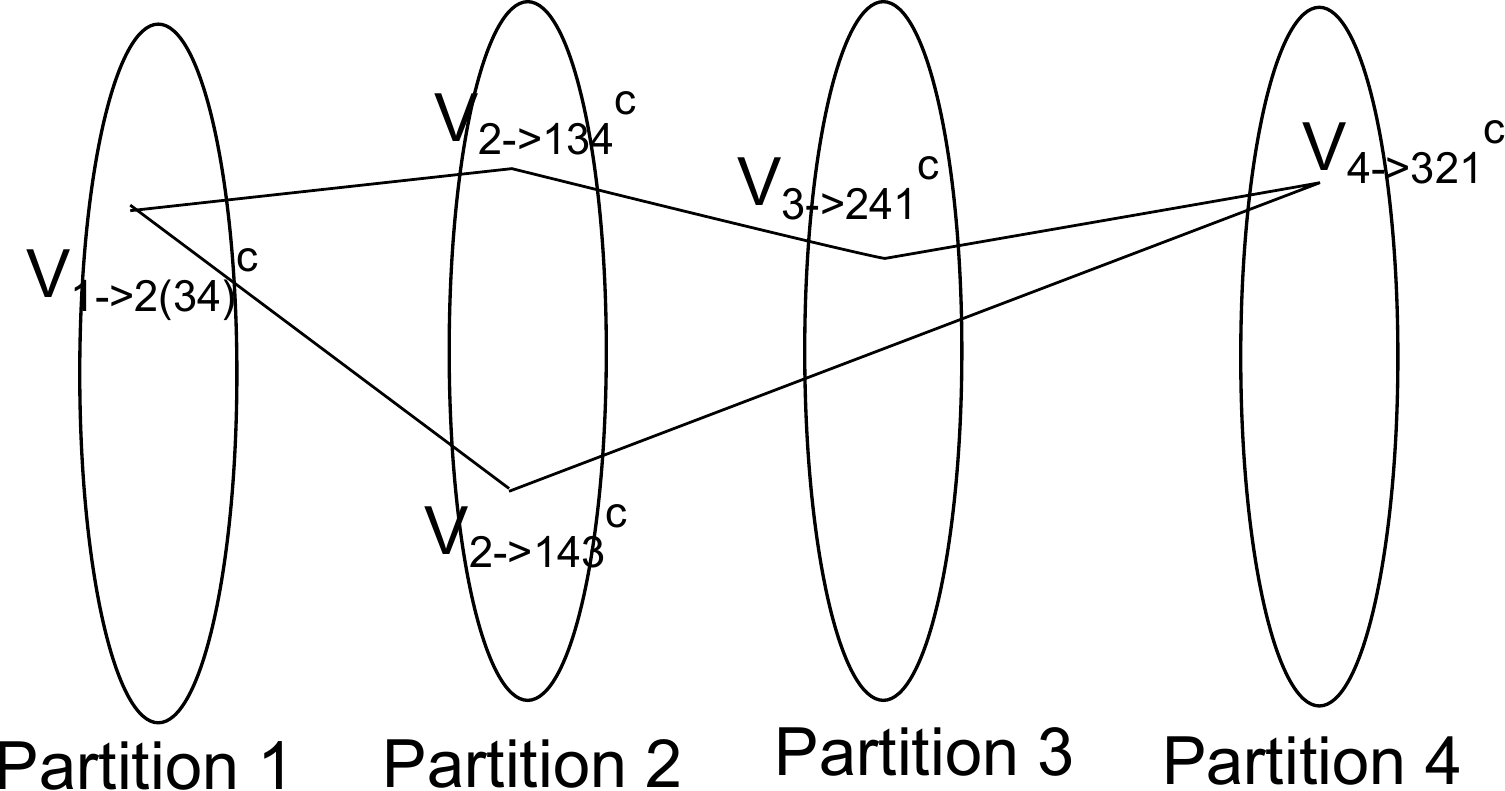}
	  \caption{A cycle of length $5$ which is an induced subgraph when $k \geq 4$}
	  \label{Fig:Counterexample}
	 \end{figure}
	
	\begin{definition}
	        An undirected graph $G^u$ is perfect if for every induced subgraph $H^u$, the coloring number $\chi(H^u)$ is equal to the maximum clique number, denoted $\omega(H^u)$. \hfill $\lozenge$
	\end{definition}
	
	\begin{lemma}\label{lem:ComplementPerfect}
	  \cite{bollobas} An undirected graph is perfect if and only if its complement is perfect. \hfill $\square$
	\end{lemma}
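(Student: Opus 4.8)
The statement is the Weak Perfect Graph Theorem, and the plan is to prove the single implication ``$G$ perfect $\Rightarrow \bar{G}$ perfect'' for every graph $G$; applying this to $\bar{G}$ in place of $G$ and using $\bar{\bar{G}}=G$ immediately yields the converse, so the ``if and only if'' follows. First I would unwind what must be shown. Every induced subgraph of $\bar{G}$ is the complement $\bar{H}$ of an induced subgraph $H$ of $G$, and $\omega(\bar{H})=\alpha(H)$ while $\chi(\bar{H})=\bar{\chi}(H)$ is the clique cover number of $H$. Hence $\bar{G}$ is perfect exactly when $\bar{\chi}(H)=\alpha(H)$ for every induced subgraph $H$ of $G$. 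Since every induced subgraph of a perfect graph is again perfect, it suffices to prove the single equality $\bar{\chi}(G)=\alpha(G)$ for an arbitrary perfect graph $G$ and then invoke it on each induced subgraph. As $\bar{\chi}(G)\geq \alpha(G)$ always holds (no clique can contain two vertices of an independent set), the entire content reduces to covering $V(G)$ by $\alpha(G)$ cliques.

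The key tool I would establish first is the vertex-replication lemma: if $G$ is perfect and $G'$ is obtained by adding a new vertex $x'$ joined to $x$ and to every neighbour of $x$, then $G'$ is perfect. I would prove this by induction on $|V(G)|$. Proper induced subgraphs of $G'$ that contain both $x$ and $x'$ are themselves replications of smaller perfect graphs and are handled by the induction hypothesis, while those omitting $x$ or $x'$ are (isomorphic to) induced subgraphs of $G$; so only $\chi(G')=\omega(G')$ remains. Here I would split into two cases: if $x$ lies in some maximum clique of $G$ then $\omega(G')=\omega(G)+1$ and a fresh colour for $x'$ gives a matching colouring; if $x$ lies in no maximum clique, then writing $S$ for the colour class of $x$ in an optimal $\omega(G)$-colouring, the set $A=S\setminus\{x\}$ meets every maximum clique, so $G-A$ has clique number $\omega(G)-1$, and colouring $G-A$ with $\omega(G)-1$ colours and giving the independent set $A\cup\{x'\}$ one further colour yields an $\omega(G)$-colouring of $G'$. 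Iterating the lemma shows any clique blow-up $G[h]$ (each vertex $v$ replaced by a clique of size $h(v)$) of a perfect graph is perfect.

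With these in hand I would prove $\bar{\chi}(G)=\alpha(G)$ by induction on $|V(G)|$, reducing it to the claim that a perfect graph $G$ has a single clique $K$ meeting every maximum independent set: deleting such a $K$ drops the independence number to exactly $\alpha(G)-1$, so the induction hypothesis covers $G-K$ by $\alpha(G)-1$ cliques, and adding $K$ covers $G$ by $\alpha(G)$ cliques. To prove the claim I would argue by contradiction using the blow-up: let $A_1,\dots,A_m$ enumerate the maximum independent sets, set $h(v)=|\{i:v\in A_i\}|$, and form the perfect graph $G'=G[h]$. Then $|V(G')|=\sum_i|A_i|=m\,\alpha(G)$ and $\alpha(G')=\alpha(G)$, while for any clique $K$ of $G$ we have $\sum_{v\in K}h(v)=\sum_i |K\cap A_i|\leq m$, with equality only if $K$ meets all $A_i$. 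If no clique meets every $A_i$, then $\omega(G')\leq m-1$, whence perfection gives $\chi(G')=\omega(G')\leq m-1$; but $\chi(G')\geq |V(G')|/\alpha(G')=m$, a contradiction.

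I expect the replication lemma, specifically the recolouring in the case where $x$ lies in no maximum clique, to be the main obstacle, since it is the one step that genuinely exploits perfection of a smaller induced subgraph rather than mere bookkeeping; the blow-up counting in the last paragraph is then a short and self-contained finish.
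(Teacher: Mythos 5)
The paper does not actually prove this lemma: it is Lov\'asz's Weak Perfect Graph Theorem, imported from \cite{bollobas} as a black box, so your attempt should be judged as a from-scratch proof rather than against an in-paper argument --- and what you give is the classical Lov\'asz proof, which is correct. The reduction of perfection of $\bar{G}$ to the single identity $\bar{\chi}(G)=\alpha(G)$ on perfect $G$ is right (induced subgraphs of $\bar{G}$ are complements of induced subgraphs of $G$, and heredity of perfection handles the quantification over subgraphs). The replication lemma's two cases both check out; the delicate one works because in an $\omega(G)$-colouring every maximum clique meets each colour class exactly once, so with $x$ in no maximum clique the class $S \ni x$ meets every maximum clique inside $A=S\setminus\{x\}$, giving $\omega(G-A)\leq \omega(G)-1$, and $A\cup\{x'\}$ is indeed independent since $x'$'s neighbours are $x$ and $N(x)$ while $A$ avoids both. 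The closing count ($|V(G')|=m\,\alpha(G)$, $\alpha(G')=\alpha(G)$, $\omega(G')\leq m-1$ against $\chi(G')\geq |V(G')|/\alpha(G')=m$) is exactly Lov\'asz's contradiction. One small repair: your weight $h(v)=|\{i: v\in A_i\}|$ can vanish on vertices lying in no maximum independent set, so $G[h]$ is not literally an iterate of the replication lemma; you should stipulate that $h(v)=0$ means deleting $v$, which is harmless because perfection is hereditary. Finally, note that the paper elsewhere cites (again without proof) the stronger substitution lemma --- Lemma \ref{lem:replacementlemma}, replacing vertices by arbitrary perfect graphs --- which subsumes your replication lemma; your version is the minimal special case needed and keeps the induction lighter, at the cost of proving by hand what the cited textbook supplies in greater generality.
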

	
	\begin{theorem}\label{thm:StrongPerfectGraph}
	   \cite{chudnovsky2006strong} An undirected graph $G^u$ is perfect if and only if no induced subgraph of $G^u$ is an odd cycle of length at least five (called an odd hole) or the complement of an odd hole (anti hole). \hfill $\square$
	\end{theorem}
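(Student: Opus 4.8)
The statement is the Strong Perfect Graph Theorem, so the realistic plan splits into a trivial direction that I would carry out by hand and a deep direction that I would invoke rather than reprove.

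For the forward (necessity) direction, the plan is to show that odd holes and odd antiholes are themselves imperfect, and then appeal to the fact --- immediate from the definition of perfection --- that every induced subgraph of a perfect graph is perfect (any induced subgraph of an induced subgraph $H$ of $G^u$ is again an induced subgraph of $G^u$, so $\chi = \omega$ holds throughout $H$). Concretely, for an odd hole $C_{2\ell+1}$ with $\ell \geq 2$ one has $\omega(C_{2\ell+1}) = 2$, since a cycle of length at least four is triangle-free, while $\chi(C_{2\ell+1}) = 3$, since an odd cycle is not bipartite; hence $\chi > \omega$ and the hole is not perfect. For an odd antihole $\bar{C}_{2\ell+1}$, imperfection follows at once from Lemma \ref{lem:ComplementPerfect}, because its complement is the imperfect odd hole. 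Since perfection is hereditary on induced subgraphs, a perfect $G^u$ can contain neither an odd hole nor an odd antihole as an induced subgraph.

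The reverse (sufficiency) direction --- that a graph with no odd hole and no odd antihole (a \emph{Berge} graph) is perfect --- is the entire content of the theorem and is where essentially all the difficulty lies. I would not attempt a self-contained argument: the only known route is the structural decomposition theorem for Berge graphs, which asserts that every Berge graph is either \emph{basic} (bipartite, a line graph of a bipartite graph, a double split graph, or a complement of one of these) or admits one of a short list of decompositions (a $2$-join, a $2$-join in the complement, or a balanced skew partition). One then establishes perfection by checking it directly for the basic classes and by showing that none of these decompositions can occur in a minimally imperfect graph, producing a contradiction. This is the main obstacle by an enormous margin --- it is the subject of the monograph-length argument of Chudnovsky, Robertson, Seymour and Thomas --- and in the present context it is used purely as a cited black box.

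For how the result is actually applied here, I would emphasize its practical payoff rather than its internals: the theorem reduces the predicate ``$G^u$ is perfect'' to the purely local and checkable condition ``$G^u$ contains no induced odd hole and no induced odd antihole.'' The subsequent plan for the $k=3$ case is therefore to verify this Berge condition directly on the structured $k$-partite graph with the complete bipartite structure, which is a finite case analysis on its categories rather than an appeal to the full strength of the decomposition; the Figure \ref{Fig:Counterexample} induced $5$-cycle indicates precisely where this local check begins to fail once $k \geq 4$.
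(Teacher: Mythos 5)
Your treatment matches the paper's: this is the Strong Perfect Graph Theorem of Chudnovsky, Robertson, Seymour and Thomas, which the paper states with a citation and no proof, exactly as you invoke the deep sufficiency direction as a black box. Your hand verification of the easy necessity direction is correct as well --- $\omega(C_{2\ell+1})=2$ and $\chi(C_{2\ell+1})=3$ for $\ell \geq 2$, antiholes handled via Lemma \ref{lem:ComplementPerfect}, and heredity of perfection under induced subgraphs --- and your closing remark about how the theorem is deployed (checking the Berge condition on the category graph for $k=3$, with the induced $5$-cycle of Fig. \ref{Fig:Counterexample} showing failure for $k \geq 4$) accurately reflects its role in the proof of Theorem \ref{Thm:perfect3}.
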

	
	\begin{lemma}\label{lem:replacementlemma}
	       \cite{bollobas} A graph obtained from a perfect graph by \textit{replacing} its vertices by perfect graphs is perfect. In a replacement, if $(u,v)$ is an edge in the original graph, each vertex of a graph which replaces $u$ will be connected to every vertex of the graph that replaces vertex $v$. \hfill $\square$
	\end{lemma}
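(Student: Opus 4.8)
The plan is to prove the replacement lemma by reducing it, in two steps, to a single coloring computation. First I would reduce to a \emph{single} substitution: replacing all vertices of a perfect graph by perfect graphs can be carried out one vertex at a time, because distinct blobs interact only through edges already present in the host graph, so sequential substitution produces exactly the desired replacement. It therefore suffices to show that if $G$ is perfect, $v\in V(G)$, and $H$ is perfect, then the graph $G'$ obtained by substituting a copy $B$ of $H$ for $v$ (joining every vertex of $B$ to all of $N_G(v)$ and retaining the internal edges of $H$) is perfect.

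Second, I would reduce this to proving the single equality $\chi(G')=\omega(G')$. The class of single substitutions of a perfect graph into a perfect graph is closed under taking induced subgraphs: an induced subgraph of $G'$ on a vertex set meeting $B$ in $S$ and the remainder in $T$ is precisely the substitution of the (perfect) induced subgraph $H[S]$ into the (perfect) induced subgraph $G[T\cup\{v\}]$, and if it misses $B$ it is simply an induced subgraph of $G$. Hence establishing $\chi=\omega$ for every such graph is the same as establishing it for $G'$ itself, which by definition yields perfection.

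The computational core rests on one auxiliary fact, the \emph{duplication lemma}: adjoining a false twin $v'$ (a new vertex with $N(v')=N(v)$ that is non-adjacent to $v$) to a perfect graph preserves perfection. This has a one-line proof, since the clique number is unchanged and, in any optimal coloring, the color class of $v$ avoids $N(v)=N(v')$ and can simply absorb $v'$. By Lemma~\ref{lem:ComplementPerfect} the complementary operation of adding a \emph{true} twin is equally harmless, and iterating true-twin additions shows that the graph $\hat G$ obtained by substituting a clique $K_a$ for $v$, where $a=\omega(H)=\chi(H)$, is perfect; in particular $\chi(\hat G)=\omega(\hat G)$.

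The step I expect to require the most care is transferring a coloring of $\hat G$ to $G'$, because a general substitution is \emph{not} a sequence of twin additions and so cannot be colored incrementally. A direct count shows $\omega(G')=\max\{\omega(G-v),\,\omega_v(G)-1+a\}$, where $\omega_v(G)$ is the largest clique of $G$ through $v$, and the same count for the clique-blob gives $\omega(\hat G)=\omega(G')$. I would then take an optimal $\omega(\hat G)$-coloring of $\hat G$: its $a$ clique vertices carry distinct colors $\gamma_1,\dots,\gamma_a$, none of which appears on $N_G(v)$. Fixing a proper $a$-coloring of $H$ with classes $I_1,\dots,I_a$ (available since $H$ is perfect) and assigning every vertex of $I_j$ the color $\gamma_j$ yields a proper coloring of $G'$ with $\omega(G')$ colors, so $\chi(G')\le\omega(G')$ and hence equality. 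Combined with the two reductions this proves $G'$ perfect, and iterating over all vertices gives the lemma. An alternative route would invoke the Strong Perfect Graph Theorem (Theorem~\ref{thm:StrongPerfectGraph}) by verifying that substitution creates no odd hole or antihole, but the coloring argument above is self-contained and avoids that machinery.
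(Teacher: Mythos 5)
The paper offers no proof of this lemma at all: it is quoted verbatim from \cite{bollobas} (it is Lov\'asz's substitution lemma), so there is no in-paper argument to compare against, and a correct self-contained proof is genuinely additional content. Your proof is correct, and it is essentially the classical substitution argument: reduce to a single substitution (sequential equals simultaneous, since distinct blobs interact only through host edges); note that the class of single substitutions of a perfect graph into a perfect graph is closed under induced subgraphs, so only the single equality $\chi(G')=\omega(G')$ needs checking; replace the blob $H$ by a clique $K_a$ with $a=\omega(H)=\chi(H)$; verify $\omega(\hat G)=\omega(G')=\max\{\omega(G-v),\,\omega_v(G)-1+a\}$; and transfer an optimal coloring of $\hat G$ to $G'$ using the facts that the $a$ clique colors are distinct, avoid $N_G(v)$, and can be spread over the classes of a proper $a$-coloring of $H$. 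Each of these steps checks out, including the subtle one you correctly flagged: a general substitution is not a sequence of twin additions, so the coloring must be transferred through the clique-blob graph rather than built incrementally. The one distinctive, and slightly delicate, choice is your derivation of the replication (true-twin) lemma: instead of Lov\'asz's direct two-case argument (split on whether $v$ lies in a maximum clique, reusing the color class of $v$ when it does not), you prove the easy false-twin case and conjugate by complementation via Lemma~\ref{lem:ComplementPerfect}. Within this paper that is legitimate, since Lemma~\ref{lem:ComplementPerfect} is taken as given; but note that Lemma~\ref{lem:ComplementPerfect} is the Weak Perfect Graph Theorem, whose standard proofs themselves rely on the replication lemma, so as a ground-up argument your route would be circular --- the direct proof of replication is just as short and avoids leaning on a much deeper theorem. (Minor housekeeping: for the duplication step you should also remark that induced subgraphs of $G$ plus a false twin are again of the same form, so perfection, not merely $\chi=\omega$, is preserved; this is immediate but worth a sentence.)
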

	
	\begin{theorem}\label{Thm:perfect3}
	              Any $3$-partite undirected graph $G^u$ possessing a complete bipartite structure between two partitions is perfect. For $k$-partite graphs, possessing the same structure with $k \geq 4$, the graph in general is not perfect.  \hfill $\square$
	\end{theorem}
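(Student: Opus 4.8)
The plan is to collapse the entire (infinite) family of admissible graphs onto the single finite category graph ${\cal G}$ of Theorem~\ref{Thm:Multicoloring}, prove perfection there, and then exhibit an explicit bad instance for the negative half. The enabling observation is that the complete bipartite structure (Lemma~\ref{lem:Completebip}) forces the neighbourhood of any $v\in C_i$ to be determined \emph{solely} by which partitions $v$ connects to: between $C_i$ and $C_j$ all participating vertices are mutually adjacent, so two vertices of the same category are non-adjacent and have identical neighbourhoods. Consequently $G^u$ is exactly the graph obtained from ${\cal G}$ by replacing each category node $v$ with an independent set of size $w(v)$, joining two such independent sets completely precisely when the category labels are adjacent in ${\cal G}$. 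Since an edgeless graph is trivially perfect, the replacement lemma (Lemma~\ref{lem:replacementlemma}) reduces the claim ``$G^u$ is perfect'' to the claim ``${\cal G}$ is perfect'' for the $k=3$ category graph (Fig.~\ref{Fig:CategoryGraph}).

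I would then verify that ${\cal G}$ for $k=3$ is perfect using the Strong Perfect Graph Theorem (Theorem~\ref{thm:StrongPerfectGraph}). Here ${\cal G}$ has nine vertices: the three ``full'' categories $V_{1\rightarrow 23},V_{2\rightarrow 13},V_{3\rightarrow 12}$ form a triangle, the six ``singleton'' categories have degree $2$, and for each pair of partitions the four relevant categories close into an induced $4$-cycle sharing the corresponding triangle edge. There is no induced $C_5$: any induced odd cycle properly $3$-coloured by its partition labels must have colour-class sizes $(2,2,1)$, so up to rotation two cycle vertices $v_1,v_3$ share partition $A$ and $v_2,v_4$ share partition $B$; the edges $v_1v_2$ and $v_3v_4$ then place $v_1,v_3$ and $v_2,v_4$ on opposite sides of the \emph{complete} bipartite graph between $A$ and $B$, forcing the chord $v_1v_4$ — a contradiction. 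Longer odd holes $C_7,C_9$ are killed by the triangle on the full categories (which survives in $C_9$) together with the fact that deleting any full category leaves two degree-$1$ vertices that cannot sit on an induced cycle; and no antihole $\overline{C_\ell}$ with $\ell\geq 7$ can occur because such a graph is $(\ell-3)$-regular with $\ell-3\geq 4$, while ${\cal G}$ has only three vertices of degree exceeding $2$ (recall $\overline{C_5}=C_5$ is already excluded). Thus ${\cal G}$, hence $G^u$, is perfect.

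For the negative half ($k\geq 4$) I would construct an explicit induced $C_5$ (Fig.~\ref{Fig:Counterexample}). Assign the cycle $v_1v_2v_3v_4v_5$ to partitions $(1,2,1,3,4)$ and let $v_1$ connect to $\{2,4\}$, $v_3$ to $\{2,3\}$, $v_2$ to $\{1\}$, $v_4$ to $\{1,4\}$, and $v_5$ to $\{1,3\}$. A direct check of the ten pairs shows the only adjacencies are the five consecutive ones, and that every pair of partitions induces a complete bipartite graph (for instance $C_1$--$C_2$ realizes $K_{\{v_1,v_3\},\{v_2\}}$, while $C_2$--$C_3$ and $C_2$--$C_4$ are empty), so this is a legitimate instance of the structure with minimum degree $\geq 1$. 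The resulting induced $C_5$ is an odd hole, so by Theorem~\ref{thm:StrongPerfectGraph} the graph is not perfect; padding with $k-4$ empty partitions extends the example to every $k\geq 4$.

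The main obstacle I anticipate is the $k=3$ hole-exclusion step, and in particular making precise \emph{why the structure forbids an induced $C_5$ with three partitions but admits one with four}. The crux is the ``two partition-equivalent vertices force a chord'' phenomenon above: with only three partitions the $(2,2,1)$ colouring of $C_5$ is forced to reuse a partition pair and thereby manufactures a chord through the complete bipartite structure, whereas a fourth partition lets the repeated pair be separated, as the explicit construction exploits. Getting this dichotomy stated cleanly (rather than via an exhaustive case analysis of the nine-vertex ${\cal G}$) is the heart of the argument; once it is in hand, both the reduction to ${\cal G}$ and the finite perfection check are routine.
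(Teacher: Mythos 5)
Your proposal is correct and follows essentially the same route as the paper's proof: collapse $G^u$ onto the finite $k=3$ category graph via the replacement lemma (Lemma \ref{lem:replacementlemma}), verify perfection there through the Strong Perfect Graph Theorem (Theorem \ref{thm:StrongPerfectGraph}), and exhibit an induced $C_5$ consistent with the complete bipartite structure for $k=4$ (Fig. \ref{Fig:Counterexample}). Your finite checks are merely more explicit than the paper's --- the $(2,2,1)$-colouring chord argument, the degree-$1$ analysis for $C_7$, and the degree bound for antiholes replace the paper's ``easy to observe'' inspection and its $14$-versus-$12$ edge count --- but the decomposition, the key lemmas, and the counterexample strategy all coincide.
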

	\begin{proof}
	              Consider the category graph of the $3$-partite graph given in Fig. \ref{Fig:CategoryGraph}.  It is easy to observe that it does not contain odd cycle of length $5$ or more. The obstruction to any such attempt comes from the triangle formed by $V_{1\rightarrow 23},V_{2 \rightarrow 13},V_{3 \rightarrow 12}$.  Now, consider odd anti holes. Complement of a cycle of length $5$ happens to be a cycle of length $5$ which is ruled out. Complement of a cycle of length $7$ or more has at least ${7 \choose 2}-7 = 14$ edges.  
	              Nevertheless, the total number of edges in the category graph is only $12$ . Hence, by Theorem \ref{thm:StrongPerfectGraph}, the category graph for $k=3$ case is perfect. Now, to get the original $3$-partite graph from the category graph, one can expand the category graph by replacing every category vertex $v$ with a completely disconnected graph of size $w(v)$. A completely disconnected graph is perfect. Therefore, by Lemma \ref{lem:replacementlemma}, the conclusion of the theorem follows.	
	              
	            Next, we exhibit a $4$-partite graph with the complete bipartite structure property between every two partitions, which is a cycle of length $5$. This has been shown in Fig. \ref{Fig:Counterexample}. It consists of one vertex each from the categories $V_{1\rightarrow 23^c4^c},V_{2\rightarrow 134^c},V_{3\rightarrow 24^1c},V_{1\rightarrow 4321^c}$ and $V_{2\rightarrow 143^c}$. Hence, not all $k$-partite graphs with the complete bipartite structure are perfect when $k \geq 4$  in general.    
	\end{proof}
	\begin{corollary}
	      For, $k=3$, considering the underlying undirected side information graph of the ICDH problem, XOR coloring scheme gives the optimal binary scalar linear code. \hfill $\square$
	\end{corollary}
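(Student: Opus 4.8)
The plan is to reduce the corollary to the perfection result already established in Theorem~\ref{Thm:perfect3}, exploiting the classical fact that a perfect graph has independence number equal to its clique cover number, and then sandwiching this against the known bounds on the minimum binary scalar linear index code length. Since the corollary concerns only the underlying undirected side information graph $G$ of an ICDH instance with $k=3$, I would first recall the decomposition from the $k \geq 3$ discussion, namely the disjoint union $G = G_1^u \cup G_2^u$, where $G_1^u$ is the edgeless graph on the degree-$0$ vertices collected in $\bigcup_i C_i^{\rm{out}}$ and $G_2^u$ is the $3$-partite graph carrying the complete bipartite structure between every pair of partitions.

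Next I would argue that the whole graph $G$ is perfect. By Theorem~\ref{Thm:perfect3} the part $G_2^u$ is perfect, and the edgeless graph $G_1^u$ is trivially perfect. To glue these two pieces together I would invoke the Strong Perfect Graph Theorem (Theorem~\ref{thm:StrongPerfectGraph}): any odd hole, and any odd antihole on at least five vertices, is a connected graph and so cannot straddle two distinct connected components. Since $G_1^u$ and $G_2^u$ share no edges, no forbidden induced subgraph can arise in $G$ that was not already present in one of the two parts; hence $G$ is perfect. Equivalently, reattaching isolated vertices never creates an odd hole or antihole.

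With $G$ perfect, I would use Lemma~\ref{lem:ComplementPerfect} to conclude that $\bar{G}$ is perfect as well, and then apply the defining equality $\chi = \omega$ of a perfect graph to $\bar{G}$ itself. Using the identities recorded in the notation paragraph, $\chi(\bar{G}) = \bar{\chi}(G)$ and $\omega(\bar{G}) = \alpha(G)$, this yields $\bar{\chi}(G) = \alpha(G)$. Finally I would invoke the standard sandwich $\alpha(G) \leq \ell(G) \leq \bar{\chi}(G)$ already used in the proof of Lemma~\ref{lem:2helpdecomp} (the lower bound because an independent set forces that many distinct transmissions, the upper bound because clique cover, i.e. XOR coloring, is a feasible scheme). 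Combining these gives $\ell(G) = \alpha(G) = \bar{\chi}(G)$, so the XOR coloring code, whose length is exactly $\bar{\chi}(G)$, attains the optimum $\ell(G)$.

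I do not expect a genuine obstacle here, since the real work is carried by Theorem~\ref{Thm:perfect3}; the only point demanding a little care is the passage from perfection of $G_2^u$ to perfection of the full graph $G$, that is, checking that reattaching the isolated vertices of $G_1^u$ does not destroy perfection. This is precisely where the connectivity of odd holes and odd antiholes is used, and it is what lets me apply the perfect-graph sandwich to $G$ as a whole rather than having to re-derive additivity of $\ell$ across the decomposition of $G$.
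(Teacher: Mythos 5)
Your proof is correct, and it follows the paper's strategy in all essentials: the real content is Theorem \ref{Thm:perfect3}, combined with Lemma \ref{lem:ComplementPerfect} and the sandwich $\alpha(G) \leq \ell(G) \leq \bar{\chi}(G)$. The one place you genuinely deviate is in handling the edgeless part $G_1^u$. The paper keeps the two components separate and invokes additivity of the optimal scalar linear length over the decomposition, $\ell(G^u) = \ell(G_1^u) + \ell(G_2^u)$, together with the fact (cited from \cite{blasiak2010index}) that for perfect undirected graphs coloring is the optimal index code; you instead absorb $G_1^u$ by proving the whole graph $G$ perfect and applying the sandwich once, globally. Your variant buys a more self-contained argument, since it avoids importing the additivity of $\ell$ across disconnected components from the index coding literature, at the cost of one extra graph-theoretic step. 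That step is sound as you argue it (odd holes and odd antiholes are connected, so none can straddle the two edge-disjoint parts), but the appeal to the Strong Perfect Graph Theorem is heavier than necessary: since there are no edges between $G_1^u$ and $G_2^u$, every induced subgraph $H$ of $G$ splits as a disjoint union, so $\chi(H)$ and $\omega(H)$ are each the maximum of their values on the pieces, and perfection of a disjoint union of perfect graphs (in particular, of a perfect graph plus isolated vertices) follows directly from the definition. Either way, the chain $\ell(G) = \alpha(G) = \bar{\chi}(G)$ is valid and yields exactly the paper's conclusion that the XOR coloring scheme, of length $\bar{\chi}(G)$, is the optimal binary scalar linear code for the underlying undirected graph.
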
 
	\begin{proof}
	      Note that, from the original graph $G_2$, directed edges were deleted to create a $k$-partite undirected graph $G_2^u$ (for notation refer Section \ref{helper3sec}). From Theorem \ref{Thm:perfect3}, the graph $G_2^u$ and its complement are perfect (Lemma \ref{lem:ComplementPerfect}) for $k=3$. As noted before, for perfect undirected graphs, coloring is the optimal index code \cite{blasiak2010index}. With $\ell (G^u)=\ell(G_1^u)+\ell(G_2^u)$ as $G_1^u$ and $G_2^u$ are disconnected. This proves the claim. 
       \end{proof}	       
	      There is a polynomial time algorithm to color perfect graphs \cite{grotschelgeometric}. As a low complexity alternative to that, we provide a greedy coloring (multi-coloring) algorithm which can optimally color, in polynomial time, the complement of the $3$-partite graph with the complete bi-partite structure. 
	
	   For an undirected graph, greedy coloring comprises of an ordering of the vertices and assigning colors (or equivalently numbers starting from $1$) sequentially. For every vertex in the order, the smallest color (number), except the ones already assigned to its neighbors processed previously in the ordering, is assigned. If no color previously can be used, then a new color is assigned (the smallest unused number). The number of colors is basically the largest number assigned to any vertex at the end of coloring. Finding the optimal order, given a graph, is intractable in general. In the case of $k=3$, we give an optimal ordering for multi coloring the complement of the category graph.  
	
	\begin{theorem}
	           An optimal ordering for coloring when $k=3$ consists of processing all vertices of one category at the same time in any order. The order for processing categories is as follows: $V_{1\rightarrow23^c} \rightarrow V_{2\rightarrow 1 3^c} \rightarrow V_{2\rightarrow31^c} \rightarrow V_{3\rightarrow 21^c} \rightarrow V_{3\rightarrow12^c} \rightarrow V_{1\rightarrow32^c} \rightarrow V_{1\rightarrow23} \rightarrow V_{2\rightarrow 1 3} \rightarrow V_{3\rightarrow12} $.  \hfill $\square$
	\end{theorem}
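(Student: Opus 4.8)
The plan is to show that the greedy run in the stated order terminates with exactly the optimum number of colors, so that the ordering is optimal. By Theorem \ref{Thm:Multicoloring} the coloring of $\bar{G}^u$ is the multi-coloring of $\bar{{\cal G}}$, and since for $k=3$ the graphs $G^u$ and $\bar{G}^u$ are perfect (Theorem \ref{Thm:perfect3} and Lemma \ref{lem:ComplementPerfect}), this optimum equals $\omega(\bar{G}^u)$, i.e. the maximum-weight clique of $\bar{{\cal G}}$, equivalently the maximum-weight independent set (MWIS) of the nine-vertex category graph ${\cal G}$. This already gives the lower bound: no coloring can use fewer colors. First I would compute this MWIS value explicitly. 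Writing $a,\dots,f$ for the six degree-one categories in processing order, namely $a=V_{1\rightarrow23^c},\,b=V_{2\rightarrow13^c},\,c=V_{2\rightarrow31^c},\,d=V_{3\rightarrow21^c},\,e=V_{3\rightarrow12^c},\,f=V_{1\rightarrow32^c}$, and noting that the three full categories $V_{1\rightarrow23},V_{2\rightarrow13},V_{3\rightarrow12}$ form a triangle in ${\cal G}$, an independent set can contain at most one of them. A short case split (none of the three, or exactly one) shows the MWIS is the maximum of $\max(w_a,w_b)+\max(w_c,w_d)+\max(w_e,w_f)$ and the three single-full-category values, e.g. $w(V_{1\rightarrow23})+w_a+w_f+\max(w_c,w_d)$ together with its two cyclic images under $1\rightarrow2\rightarrow3$. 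This maximum is the target count.

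Next I would justify that greedy may treat each category as an atom. Because all vertices of a single category are connected to exactly the same vertices elsewhere and are mutually adjacent in $\bar{G}^u$, processing them consecutively assigns them the $w(v)$ smallest colors avoiding a fixed set of already-used colors, and this set of colors is independent of the internal order; this establishes the ``in any order'' part and reduces the whole run to assigning one color interval (possibly with holes) per category. The structural fact I would then extract from the adjacency of $\bar{{\cal G}}$ is that the three full categories are pairwise \emph{non}-adjacent in $\bar{{\cal G}}$ (they are a clique in ${\cal G}$), so when each is processed its only colored neighbors are among $a,\dots,f$; in particular each full category ignores the full categories processed before it.

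The heart of the argument is the explicit simulation. Setting $M=\max(w_a,w_b)$ and $P=M+\max(w_c,w_d)$, direct tracking shows that after the six degree-one categories the used colors are exactly $[1,\,P+\max(w_e,w_f)]$: the pairs $(a,b)$, $(c,d)$, $(e,f)$ are non-adjacent in $\bar{{\cal G}}$ and so share the low block, the middle block, and the top block respectively, matching the first MWIS expression. When $V_{1\rightarrow23}$ is then processed, its forbidden colors (those of $a,c,d,f$) leave a low gap of size $(w_b-w_a)^+$ below $M$ and then free colors above $P+w_f$; since greedy fills the gap first, the highest color it reaches is precisely $w(V_{1\rightarrow23})+w_a+w_f+\max(w_c,w_d)$ and never more. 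By the cyclic symmetry of ${\cal G}$ the identical computation for $V_{2\rightarrow13}$ and $V_{3\rightarrow12}$ — each ignoring the previously processed full categories — reaches exactly the other two single-full-category values. Hence the final count is the maximum of the four expressions, i.e. the MWIS, which matches the lower bound and proves optimality.

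I expect the main obstacle to be the interval-and-gap bookkeeping in this last step: one must verify that for each full category the slack produced by the $\max(\cdot,\cdot)$ terms of the degree-one blocks equals the size of the low gap available to it, so that greedy's gap-filling makes the overflow coincide exactly with the corresponding independent-set value and never overshoots. Checking the sign cases $w_a\gtrless w_b$, $w_c\gtrless w_d$ (and their cyclic images) is routine, and the cyclic symmetry of ${\cal G}$ reduces the verification to essentially one full category.
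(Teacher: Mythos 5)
Your proof is correct, but it takes a genuinely different route from the paper's. The paper never converts the lower bound into a maximum-weight independent set: it first derives the closed-form count of the greedy run (the expression $p+\max_{i,\,j\neq k\neq i}(\cdot)^{+}$ with $p=\max(w_a,w_b)+\max(w_c,w_d)+\max(w_e,w_f)$ in your notation), and then compares greedy against an \emph{arbitrary} coloring by a forbidden-colors counting argument: any coloring spends $p+\delta$ colors on the six one-complement categories, the shared colors of each sharable pair satisfy $c\left(V_{j\rightarrow ki^c},V_{k\rightarrow ji^c}\right)=\max\left(w(V_{j\rightarrow ki^c}),w(V_{k\rightarrow ji^c})\right)+\delta_{i^c}$ with $\delta_{i^c}\geq 0$, and the inequality $\max(x-\delta,0)\geq \max(x,0)-\delta$ closes the chain. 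You instead certify optimality by exhibiting the optimum value explicitly: the clique lower bound $\chi(\bar{G}^u)\geq\omega(\bar{G}^u)$ (you invoke perfection via Theorem \ref{Thm:perfect3}, but note that only this trivial direction is actually needed, since your greedy simulation then re-proves equality for this particular graph), the identification of $\omega(\bar{G}^u)$ with the maximum-weight independent set of the nine-vertex category graph, the case split into the four candidate expressions, and the interval bookkeeping showing greedy's highest color in the overflow case is exactly $w(V_{1\rightarrow 23})+w_a+w_f+\max(w_c,w_d)$ — I verified this: the highest color is $P+w_f+w(V_{1\rightarrow23})-(w_b-w_a)^{+}$, which collapses to your expression because $\max(w_a,w_b)-(w_b-w_a)^{+}=w_a$, and the cyclic computations for $V_{2\rightarrow13}$ and $V_{3\rightarrow12}$ (whose forbidden sets are $\{b,c,e,f\}$ and $\{a,b,d,e\}$ respectively) come out likewise. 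What your route buys is a closed-form optimum and a matching clique certificate, which makes the argument more transparent and nearly self-contained; what the paper's route buys is a direct dominance statement over every coloring without computing the optimum, at the cost of heavier slack bookkeeping. Two minor points: your six categories $a,\dots,f$ have degree two in ${\cal G}$ (each is adjacent to its partner and to one full category), so ``degree-one'' should read ``one-complement''; and your phrase ``precisely \dots and never more'' holds only in the overflow case — when $w(V_{1\rightarrow23})\leq (w_b-w_a)^{+}$ the run stays within $p$ colors, but then the corresponding independent-set value is also at most $p$, which is exactly the sign-case check you defer, and it does go through.
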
 
	\begin{proof}
	          Let us consider the category graph, denoted $G^u$, for $k=3$ given in Fig. \ref{Fig:CategoryGraph}. The proof exploits the structure of this graph. One needs to multicolor the complement of this graph. Instead of considering the edges of the complement, we will consider the constraints from coloring on the category graph itself. For coloring any category, the colors from the categories not connected to this category are forbidden and any remaining colors can be used. Let us denote the vertices (or multicolors needed) in each category by the weight function $w(v), v \in {\cal V}$. Let us consider the subgraph induced by $V_{i \rightarrow j k^c}, \forall ~i \neq j \neq k$. Denote it by $G_1^u$. In $G_1^u$, any category $V_{i \rightarrow jk^c}$ can be colored only by colors from $V_{j \rightarrow ik^c}$ and vice versa. Hence, the forbidden category sets for both are identical in $G_1^u$ and comprise the remaining category sets in $G_1^u$. Therefore, distinct colors used for coloring both the categories in any optimal coloring of $G_1^u$ has to be at least $\max (w(V_{i \rightarrow jk^c}),w(V_{j \rightarrow ik^c}))$. It is easy to see that following the greedy order given in the theorem, with respect to $G_1^u$, achieves this lower bound and hence optimal. According to the order given in the theorem, it is enough to show that after greedy optimal coloring of $G_1^u$, coloring category sets $V_{i \rightarrow jk}$ (in any order) will be optimal for $G^u$. This is because , without loss of generality, the partitions $1,2,3$ can be interchanged in labels as we do not assume anything about the  weights of different categories. 
	
	        Let $p$ be the optimal number of colors used by the optimal greedy algorithm on $G_1^u$. Now assuming greedy ordering in the theorem, the total colors required for the entire graph $G^u$ is given by:
	\begin{align*}
	\rm{Colors ~ used ~ in ~ greedy} = p+\max \limits_{i, j \neq k \neq i} \left(V_{i \rightarrow jk} - \left(p - w\left(V_{i \rightarrow jk^c}\right) \right.\right. \nonumber\\
	\left.\left. - w\left(V_{k \rightarrow ij^c}\right) - \max \left(w\left(V_{j \rightarrow ki^c}\right), w\left(V_{k \rightarrow ji^c}\right)\right) \right)\right)^{+} \nonumber
	\end{align*}
	 where $(x)^{+}=\max (x,0)$. Categories $V_{i \rightarrow jk},~\forall i \neq j \neq k$ form a triangle (clique). From $p$ colors used for $G_1^u$, forbidden colors for the category $V_{i \rightarrow jk}$ (i.e. colors of categories $V_{i \rightarrow jk^c}$, $V_{k \rightarrow ij^c}$ distinct colors used for the connected pair $V_{j \rightarrow ki^c},V_{k \rightarrow ji^c}$- all three that do not share any mutual colors) are removed. The remaining colors from $p$ colors are used to color $V_{i \rightarrow jk}$. If still new colors are required for any of $V_{i \rightarrow jk}$, then they can be shared among the three categories that form the triangle. This freedom of sharing is the reason for the outer $\max()$ function. $(.)^{+}$ appears because, after using up non-forbidden colors from $G_1^u$, we may not need any new colors for a particular category $V_{i \rightarrow jk}$. The $\max$ function inside follows from the lower bound and its achievability due to greedy coloring for the graph $G_1^u$, as explained previously for the pair $V_{j \rightarrow ki^c},V_{k \rightarrow j i^c}$. 
	
	      Now assume any other arbitrary coloring order (possibly in which order among vertices in same category may also matter) which is optimal. After coloring the entire graph, consider the subgraph $G_1^u$. Since greedy is optimal for this graph, the new optimal coloring must have required $p+\delta$ colors where $\delta \geq 0$ for $G^u_1$. Now, using the same forbidden colors argument as in the preceding paragraph, the total number of colors used in this coloring would be at least: 
	 \begin{align*}
	(p+\delta)+\max \limits_{i, j \neq k \neq i} \left(V_{i \rightarrow jk} - \left(p+\delta - w\left(V_{i \rightarrow jk^c}\right)- \right.\right. \nonumber\\ 
	\left.\left. w\left(V_{k \rightarrow ij^c}\right) - c\left(V_{j \rightarrow ki^c}, V_{k \rightarrow ji^c}\right) \right)\right)^{+} 
	\end{align*}
	 where $(x)^{+}=\max (x,0)$ where $c (v,u)$ represents the number of distinct colors used for multicoloring vertices $u$ and $v$ in the new coloring. In the graph $G_1^u$, $V_{j \rightarrow ki^c}, V_{k \rightarrow ji^c}$ have identical forbidden sets namely, the rest of the categories in $G_1^u$. Therefore,
	\begin{equation*}
	 c\left(V_{j \rightarrow ki^c}, V_{k \rightarrow ji^c}\right) = \max \left(w\left(V_{j \rightarrow ki^c}\right), w\left(V_{k \rightarrow ji^c}\right)\right) + \delta_{i^c}
	\end{equation*}
	where $ \delta_{i^c} \geq 0$. Hence, $ -\delta \leq \delta_{i^c} - \delta $.
	   Therefore, we have the following chain of inequalities:
	           \begin{align} 
	                    \rm{Colors ~ used} & \geq (p+\delta)+\max \limits_{i, j \neq k \neq i} \left(V_{i \rightarrow jk} - p + w\left(V_{i \rightarrow jk^c}\right)+ \right. \nonumber\\
	   &\left.  w\left(V_{k \rightarrow ij^c}\right) + \max \left(w\left(V_{j \rightarrow ki^c}\right), w\left(V_{k \rightarrow ji^c}\right)\right) + \delta_{i^c} - \delta \right)^{+} \nonumber \\
	                     & \geq (p+\delta)+\max \limits_{i, j \neq k \neq i} \left(V_{i \rightarrow jk} - p+ w\left(V_{i \rightarrow jk^c}\right)+ \right. \nonumber\\
	 & \left. w\left(V_{k \rightarrow ij^c}\right) + \max \left(w\left(V_{j \rightarrow ki^c}\right), w\left(V_{k \rightarrow ji^c}\right)\right) -\delta \right)^{+}  \nonumber \\ 
	                     & \geq (p+\delta)+\max \limits_{i, j \neq k \neq i} \left(V_{i \rightarrow jk} - p+ w\left(V_{i \rightarrow jk^c}\right)+ \right. \nonumber\\
	  &\left. w\left(V_{k \rightarrow ij^c}\right) + \max \left(w\left(V_{j \rightarrow ki^c}\right), w\left(V_{k \rightarrow ji^c}\right)\right) \right)^{+} -\delta \label{Eqn:colorsgreedy} \\ 
	                     &  = \rm{Colors ~ used ~ in~ greedy} \nonumber
	           \end{align}
	  Eqn. (\ref{Eqn:colorsgreedy}) is true because $\max (x-\delta,0) \geq \max(x,0) - \delta$.
	    This means that one uses more colors than in the greedy algorithm by coloring it any other way. Hence the proposed order is optimal.          
	\end{proof}
	   
	\textit{Remark:} It is easy to check that, given the ICDH problem, labeling every vertex with the category labels takes at most $O(n^2)$ time. And ordering them would also take $O(n^2)$ time. Any vertex possessing category label with one complement, i.e. $V_{i \rightarrow jk^c}$, has to be placed before all the vertices with no complement, i.e. $V_{i \rightarrow jk}$, and the rest can be in any order without loss of generality for the greedy algorithm to work. Therefore, it takes $O(1)$ time to decide the relative order between two vertices. Greedy coloring would take at most $O(n^2)$ time given the ordering. The greedy coloring takes time polynomial in $n$ which is in sharp contrast with the intractability of coloring for general tri-cliques.
	
	 \section{Integer Programming Formulation for $k \geq 4$ }\label{intprogsec}
	              It is  unclear if a similar greedy order of processing categories can be found out for category graphs for $k \geq 4$. At least, if there is such a provable greedy ordering for any $k$, then deciding the relative order between two vertices of different labels must take at least time exponential in $k$. If it does not, when $k=n$ , it would lead to polynomial time coloring for the complement of a general undirected graph. We provide an integer programming formulation for the multi-coloring problem on the catgeory graph to illustrate that, for constant $k$, one can optimally color in time polynomial in $n$. We provide a specific algorithm based on Graver Basis which can utilize some properties in this setting.
	
	             For $k \geq 4$, the problem reduces to multicoloring the complement of the category graph ${\cal G}\left({\cal V},{\cal E}\right)$ with the positive integral valued weight function $w:{\cal V} \rightarrow \mathbb{Z}^{+}$ (Theorem \ref{Thm:Multicoloring}). The structure of the category graph and its size is fixed for given $k$. Only the weights assigned to each categories, representing the number of vertices in each category, varies depending on the problem instance. Let ${\cal C}$ be the set of all cliques in the category graph. Note that, a clique in the category graph is an independent set in the complement. Hence, we can impose coloring constraints on the category graph itself. Let us denote a clique in ${\cal G} $ by $S=\{i_,1i_2,\ldots i_m\}$ where $i_j \in {\cal V}$ comprise the category vertices in the clique. Let $c(S), ~\forall S \in {\cal C}$, denote the number of common colors assigned to all category vertices comprising the clique. These colors are not used elsewhere in the coloring.  Now, we write multi coloring on $\bar{{\cal G}}$ as the following integer linear program. This formulation is similar to that in \cite{mehrotra2007branch}. 
	
	       \begin{align} \label{Eqn:integcol} 
	                      & \min \sum \limits_{S \in {\cal C}} c(S) \nonumber \\
	                       & \rm{subject~to~}   \sum \limits_{S: v \in S } c(S) = w(v), \forall v \in {\cal V} \nonumber \\
	                          &              c(S) \geq 0, ~ c(S) \in \mathbb{Z} ~ \forall S \in {\cal C}   
	        \end{align}
	
	The linear integer program is of the following form:
	  \begin{align}\label{IntegerProgCol}
	     & \min \mathbf{1}^T\mathbf{c} \nonumber \\ 
	      & \rm {subject~ to~} \mathbf{A}\mathbf{c} =\mathbf{w} \nonumber \\
	      &  \mathbf{c} \geq 0,~ \mathbf{c} \in \mathbb{Z}^{f_2(k)}      
	  \end{align}
	 where $\mathbf{c}$ is the vector comprising of entries $\{c(S)\}_{\forall S \in {\cal C}}$ and $\mathbf{A}$ is matrix which in turn depends only on the structure of the category graph which depends only on $k$. Let the number of rows of $\mathbf{A}$ be $f_1(k)$. Let the number of columns of $\mathbf{A}$ be $f_2(k)$. The number of rows of $\mathbf{A}$ is the total number of categories which is $k\left(2^{k-1}-1 \right)$. The length of vector $\mathbf{c}$ is the total number of cliques in the category graph ${\cal G}$. A very loose bound of $k^k 2^{k^2}$ can be established using that fact that the largest clique size is $k$ for the category graph. $\mathbf{w}$ represents vector of weights $w(v),~\forall v \in {\cal V}$. The magnitude of entries in $\mathbf{w}$ is bounded by $n$. Hence, it needs $\log n$ bits to represent each entry. When the number of constraints and variables is fixed, it is possible to solve the integer program in time polynomial in $n$ by methods in \cite{lenstra1983integer}. 
	
	 In this case, the matrix $\mathbf{A}$ is fixed and does not change for a given $k$. Exploiting this through an approach of computing Graver basis of $\mathbf{A}$ to solve the integer program, we have the following result: 
	  \begin{theorem} \label{Thm:XORcol}
	   The XOR coloring scheme for the ICDH problem with constant number of helpers ($k$) and $n$ users can be computed in time $O(n \log n)$ by solving Problem (\ref{IntegerProgCol}). 
	\end{theorem}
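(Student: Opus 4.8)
The plan is to combine the reduction of Theorem~\ref{Thm:Multicoloring} with the fact that the constraint matrix $\mathbf{A}$ in (\ref{IntegerProgCol}) is a fixed object once $k$ is fixed. By Theorem~\ref{Thm:Multicoloring} the optimal XOR coloring equals the optimal multicoloring of $\bar{\cal G}$, which is exactly the value of the integer program (\ref{IntegerProgCol}); so it suffices to solve (\ref{IntegerProgCol}) and then read off a clique cover of $G^u$. I would solve it by a Graver-basis augmentation scheme, exploiting that $\mathbf{A}$ is independent of $n$ and that the objective $\mathbf{1}^\transp\mathbf{c}$ is a non-negative integer bounded by $n$.

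First I would precompute the Graver basis $\Gamma(\mathbf{A})$ of $\mathbf{A}$. Since $\mathbf{A}$ is the category-vertex versus clique incidence matrix of the category graph ${\cal G}$, its dimensions $f_1(k)\times f_2(k)$ depend only on $k$; hence $\Gamma(\mathbf{A})$ is a fixed finite set of integer kernel vectors with bounded entries, and computing it contributes only a constant (in $n$) additive term. Next I would exhibit a trivial feasible point: setting $c(\{v\})=w(v)$ on each singleton clique and $c(S)=0$ otherwise satisfies $\mathbf{A}\mathbf{c}=\mathbf{w}$ with objective $\sum_{v\in{\cal V}} w(v)=n$, namely the naive clique cover in which each vertex forms its own color.

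Then I would augment. Graver's optimality criterion states that a feasible $\mathbf{c}$ is optimal for (\ref{IntegerProgCol}) iff there is no $g\in\Gamma(\mathbf{A})$ and integer $\lambda\geq 1$ with $\mathbf{c}+\lambda g\geq 0$ and $\mathbf{1}^\transp g<0$. At each iteration I would scan the constant-size set $\Gamma(\mathbf{A})$, and for each improving direction $g$ (one with $\mathbf{1}^\transp g<0$, an integer $\leq -1$) take the largest feasible step $\lambda=\min_{i:\,g_i<0}\lfloor c_i/(-g_i)\rfloor$ and update $\mathbf{c}\leftarrow\mathbf{c}+\lambda g$. Because $\mathbf{1}^\transp(\lambda g)=\lambda\,(\mathbf{1}^\transp g)\leq -1$, the integer objective strictly decreases by at least one per iteration; starting from value $n$ and bounded below by $0$, the scheme terminates after at most $n$ iterations at a point certified optimal by the same criterion. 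Each iteration scans $O(1)$ fixed vectors and performs $O(1)$ divisions and comparisons on integers of magnitude at most $n$ (hence $O(\log n)$ bits), costing $O(\log n)$; the total is $O(n\log n)$. Finally, from the optimal $\mathbf{c}$ I would build the code in $O(n)$ extra time: each of the $c(S)$ colors of a clique $S$ takes one still-unused vertex from every category in $S$ (these form a clique in $G^u$ by the complete bipartite structure), and the associated base-station transmission is the XOR of those packets.

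The hard part is the running-time analysis rather than feasibility or correctness, which follow from Graver's criterion. The two points to nail down are (i) that $\Gamma(\mathbf{A})$ is a genuinely fixed, precomputable object, so that a single step costs only $O(\log n)$ despite the exponentially many cliques indexing $\mathbf{c}$, and (ii) that the number of augmentations is $O(n)$; both rest on the objective being integral and upper bounded by the trivial solution value $n$. A minor check is that maximal steps preserve nonnegativity and integrality of $\mathbf{c}$, which is immediate from the floor in the definition of $\lambda$.
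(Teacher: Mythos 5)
Your proposal is correct and follows essentially the same route as the paper's appendix: a Graver-basis augmentation scheme for (\ref{IntegerProgCol}), with the Graver basis of $\mathbf{A}$ precomputed offline since $\mathbf{A}$ is fixed once $k$ is fixed, initialized at the singleton-clique feasible point $c(\{v\})=w(v)$ of objective value $n$, and terminating at a point certified optimal by the Graver test-set property. The only cosmetic differences are that you bound the number of augmentations by $n$ via the elementary ``integral objective starts at $n$ and decreases by at least one'' argument where the paper invokes Onn's $2nf_2(k)$ bound on oracle calls, and that your maximal-step rule enforces only $\mathbf{c}\geq 0$ rather than the paper's box constraint $\mathbf{c}\leq n\mathbf{1}$ --- which is harmless, since the $0/1$ equality constraints $\mathbf{A}\mathbf{c}=\mathbf{w}$ with $\mathbf{w}\leq n\mathbf{1}$ already imply that bound for any nonnegative feasible $\mathbf{c}$.
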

	 \begin{proof}
	     The proof is given in the appendix.
	 \end{proof}
	 	
	\section{Extension to Vector XOR coloring}\label{sec:AlgvectorXOR}
	  In this section, we extend the above tractability results to another achievable scheme that we call \textit{vector XOR coloring}. This scheme is based on fractional coloring of the complement of the undirected side information graph $G$ first introduced in \cite{blasiak2010index}. This scheme is a vector linear scheme. Therefore, let us assume that the message desired by the $i$-th user is a packet consisting of $p$ bits $\left[x_{i1}~x_{i2}~x_{i3} \ldots x_{ip} \right] \in \mathbb{F}_2^{1 \times p}$. This assumption models the following scenario: The message desired by user $i$ is split into $p$ sub-packets of equal size and every bit of sub-packet $m$ undergoes the same encoding (XORed similarly) as the corresponding bit $x_{im}$ in the encoding scheme.
	  
	   \begin{definition}
	       A Vector XOR coloring scheme of broadcast rate $\frac{t}{p}$ for  $G_d(V,E_d)$ is given by:
	       
	\begin{enumerate}       
	     \item   $t$ linear encoding functions:
	       \begin{equation} \label{Eqn:Colorenc}
	                y_m =\sum \limits_{i=1}^{n} \sum \limits_{j=1}^{p}  G_{m,ij}  x_{ij}, ~ \forall ~1 \leq m \leq t 
	        \end{equation}
where $G_{m,ij} \in \mathbb{F}_2$ and addition is over the binary field $\mathbb{F}_2$. For any $i,j: 1 \leq i \leq n,~ 1 \leq j \leq p$, $G_{m,ij}$ is non zero for exactly one $m$. Every sub-packet participates in only one of the transmissions.       

             \item  $np$ linear decoding functions $\widehat{\phi_{ij}}, ~\forall 1 \leq i \leq n,~ 1 \leq j \leq p$ such that: 
        \begin{equation}
	                         x_{ij} = \widehat{\phi_{ij}} \left( y_1, y_2 \ldots y_t, N_i \right) = \widehat{\phi_{ij}} \left( y_{m}, N_i \right)   
           \end{equation}    
        for that unique $m$, such that $G_{m,ij} \neq 0$. Here, $N(i)$ is the set of side information packets user $i$ has access to, listed by the directed out-neighborhood of vertex $i$ in the graph $G_d$.  If $j \in N_i$, then
         $\left[x_{j1}~x_{j2} \ldots x_{jp}\right]$ is present as side information at user $i$.
        \end{enumerate}  \hfill $\lozenge$.
	\end{definition}
 
    The optimum vector XOR coloring (the optimum ratio $\frac{t}{p}$) is given by fractional coloring of the complement of the undirected side information graph denoted by $\chi_f(\bar{G})$. The achievable scheme was first outlined in \cite{blasiak2010index}. We briefly mention it here. Fractional coloring number $\chi_{f}\left(\bar{G}\right)$ is the optimum of the following linear program:
   
         \begin{align}\label{Prob:fracchrom}
             \min & \sum \limits_{C \in {\cal MC}(G)} q(C) \nonumber \\
              \mathrm{subject~ to~} & \sum \limits_{C:v \in C} q(C) \geq 1,~ \forall v \in V \nonumber \\
              \hfill & q(C) \geq 0,~\forall C \in {\cal MC}(G),~ q(C) \in \mathbb{R}  
         \end{align}
     where ${\cal MC}(G)$ is the set of all maximal cliques in $G$. Real nonnegative weight $q(C)$ is assigned to every maximal clique $C$ in $G$ such that every vertex is covered by a weight of at least $1$ through only the maximal cliques that contain it. The minimum possible total weight is called the fractional chromatic number. Since the program has integral coefficients, the optimum $q(C)$ is always rational for all $C$. Let $p$ denote the greatest common divisor of all denominators of $q(C)$ in the optimum solution. Then, setting $q'(C)=p q(C) ~ \forall C $, the covering constraint becomes $\sum \limits_{C:v \in C} q'(C) \geq p$. Let the total number of colors be $t= \sum \limits_{C \in {\cal MC}(G)} q'(C)$. This means every maximal clique $C$ is assigned $q'(C)$ integral colors such that every vertex gets at least $p$ colors. And the total number of colors is $t$. Now it is possible to reallocate the colors from the set of maximal cliques to the set of all cliques in such a way that every vertex is colored exactly $p$ times and the total number of colors (objective function value) $t$ remains the same. We will not go into how the reallocation is done. We refer to \cite{mehrotra2007branch} for details of reallocation of colors from the set of maximal cliques to the set of cliques while maintaining the total number of colors. Therefore, after solving the linear program, every clique (even non maximal ones) is assigned a nonnegative integral weight such that every vertex is covered exactly by $p$ colors and the total number of colors used is $t$ and $\chi_f \left(\bar{G}\right)=t/p$ is the smallest possible value.
     
      Now, we specify the vector XOR coloring achievable scheme corresponding to the fractional coloring solution, with the help of an example. Let us say that the number of colors assigned to vertex $1$ is $p$ and out of that $p'<p$ colors have been assigned to a clique $C=\{1,2,3\}$. Let us number all the colors from $1$ to $t$. Let color $m$ be assigned to $C$. Let $i,j$ and $k$ represent the positions of color $m$ in the different sorted orderings of $p$ colors assigned to vertices $1,2$ and $3$ respectively. Then, the $m$-th transmission is an XOR of sub-packets $x_{1i},x_{2j}$ and $x_{3k}$, i.e. $y_m=x_{1i}+x_{2j}+x_{3k}$ as in (\ref{Eqn:Colorenc}). Since $C$ is a clique in the side information graph $G$, each of the users $1,2$ and $3$ decodes exactly one sub-packet using the side information available. Clearly, every sub-packet participates in exactly one transmission corresponding to a clique. Therefore, fractional chromatic number corresponds to the broadcast rate of the optimal vector XOR coloring scheme.
      
      Clearly, (\ref{Prob:fracchrom}) is a linear program with exponentially many variables (one for each maximal clique) for a general graph. If we consider an undirected $k$-partite graph $G^u$ with the complete bipartite structure, we show that using the machinery of section \ref{categorysec}, problem (\ref{Prob:fracchrom}) is equivalent to the following linear program which corresponds to a fractional multi-coloring problem on the complement of the category graph ($\bar{{\cal G}}$):
      
        \begin{align}\label{Prob:fracmulti}
             \min & \sum \limits_{S \in {\cal MC}({\cal G})} c(S) \nonumber \\
              \mathrm{subject~ to~} & \sum \limits_{S:v \in S} c(S) \geq w(v),~ \forall v \in {\cal V} \nonumber \\
              \hfill & c(S) \geq 0,~\forall S \in {\cal MC}({\cal G}),~ c(S) \in \mathbb{R}  
         \end{align}
     Here, ${\cal G}\left({\cal V},{\cal E}\right)$ is the category graph of $G$. ${\cal MC}\left({\cal G}\right)$ is the set of maximal cliques in ${\cal G}$. As before, $S$ represents a maximal clique in the category graph and $c(S)$ represents the amount of "fractional colors" assigned exclusively to the maximal clique $S$. $w(v)$ is the weight of the category vertex $v$ or the number of vertices in the graph $G$ that belongs to category $v$.  We have the following theorem:
     
     \begin{theorem}
          Fractional multi-coloring of $\bar{{\cal G}}$ (Problem \ref{Prob:fracmulti}) is equivalent to the fractional coloring of $\bar{G}$ (Problem \ref{Prob:fracchrom}).
     \end{theorem}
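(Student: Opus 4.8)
The plan is to prove the equivalence by exhibiting two explicit, objective-preserving maps between the feasible regions of Problem \ref{Prob:fracchrom} and Problem \ref{Prob:fracmulti}, and concluding that the two linear programs attain the same optimum. The structural backbone is a tight correspondence between the maximal cliques of $G^u$ and those of its category graph $\mathcal{G}$, which I would establish first.

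Since every category $v$ is expanded into $w(v)$ mutually non-adjacent vertices, any clique of $G^u$ meets each category in at most one vertex. By the complete bipartite structure (property~2 of categories), two vertices in distinct categories are adjacent in $G^u$ if and only if the two category labels are adjacent in $\mathcal{G}$, and the adjacency is then all-or-nothing. Hence the set of categories touched by a clique $C$ of $G^u$ is itself a clique $S$ of $\mathcal{G}$, and $C$ picks exactly one vertex from each category of $S$. A maximality check shows $C \in \mathcal{MC}(G^u)$ iff $S \in \mathcal{MC}(\mathcal{G})$: a category outside $S$ adjacent to all of $S$ would let us extend $C$, while no second vertex from a category already in $S$ can be added. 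This gives
\[
\mathcal{MC}(G^u) \;=\; \bigl\{\, \text{one-vertex-per-category selections from } S \;:\; S \in \mathcal{MC}(\mathcal{G}) \,\bigr\},
\]
and I write $\mathrm{proj}(C)=S$ for the unique maximal clique of $\mathcal{G}$ carrying $C$.

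With the correspondence in hand, the two maps are immediate. \textbf{Forward:} given feasible $q$ for Problem \ref{Prob:fracchrom}, set $c(S)=\sum_{C:\,\mathrm{proj}(C)=S} q(C)$ for each $S\in\mathcal{MC}(\mathcal{G})$. Summing the constraint $\sum_{C\ni u} q(C)\ge 1$ over the $w(v)$ vertices $u$ of a fixed category $v$, and using that each clique meets $v$ in at most one vertex, yields $\sum_{S\ni v} c(S)\ge w(v)$; moreover $\sum_S c(S)=\sum_C q(C)$, so cost is preserved. \textbf{Reverse:} given feasible $c$ for Problem \ref{Prob:fracmulti}, distribute each $c(S)$ uniformly over the $\prod_{v\in S} w(v)$ cliques projecting to $S$, i.e.\ $q(C)=c(S)/\prod_{v\in S} w(v)$ whenever $\mathrm{proj}(C)=S$. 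For a vertex $u$ in category $v$, exactly $\prod_{v'\in S,\,v'\ne v} w(v')$ of the cliques projecting to a given $S\ni v$ contain $u$, so $\sum_{C\ni u} q(C)=\sum_{S\ni v} c(S)/w(v)\ge 1$ by feasibility of $c$, and again $\sum_C q(C)=\sum_S c(S)$. Since each map is supported on the correct family of maximal cliques and carries feasible points to feasible points of equal cost, the two optima must coincide.

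The only genuinely delicate point — the main obstacle — is the reverse direction: one must verify that a fractional multi-coloring of the small, fixed category graph can be spread out into an honest fractional clique cover of the large graph $G^u$ without inflating the total weight. What makes this go through, and what I would emphasize, is the symmetry guaranteed by property~1: all vertices inside a category are interchangeable, so the uniform split is consistent across the categories of each clique, and the per-vertex covering constraint collapses — after dividing by $w(v)$ — to exactly the per-category constraint $\sum_{S\ni v} c(S)\ge w(v)$. The forward direction and all objective bookkeeping are then routine aggregations once the maximal-clique correspondence is in place.
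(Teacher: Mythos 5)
Your proposal is correct, and your forward direction coincides with the paper's: both use the projection $h$ sending a maximal clique $C$ of $G^u$ to the maximal clique of the category graph formed by the categories it touches, and both aggregate $c(S)=\sum_{C:\,h(C)=S} q(C)$ (you actually spell out the maximal-clique correspondence and the summation-over-a-category argument more carefully than the paper, which asserts them tersely). The reverse direction, however, is genuinely different. The paper clears denominators: it takes $b$ to be the lcm of the denominators of the rational feasible $c(\cdot)$, works with the integer assignment $c'=b\,c$, hands each maximal clique $S$ of $\mathcal{G}$ a literal set $\mathcal{C}'_S$ of $c'(S)$ numbered colors, partitions the colors incident to each category $v$ among its $w(v)$ vertices so that each vertex $a$ receives $|\mathcal{C}'_a|\geq b$, defines $\mathcal{Q}_C=\bigcap_{a\in C}\mathcal{C}'_a$, proves the sets $\mathcal{Q}_C$ are pairwise disjoint by contradiction, and finally sets $q(C)=|\mathcal{Q}_C|/b$. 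You instead spread $c(S)$ uniformly, $q(C)=c(S)\big/\prod_{v\in S}w(v)$ over the $\prod_{v\in S}w(v)$ cliques projecting to $S$, and verify feasibility and cost preservation by a one-line count. Your route is shorter and stays entirely at the LP level: it needs no rationality/lcm step (it works for any real feasible $c$), no explicit color bookkeeping, and no disjointness lemma --- the interchangeability of vertices within a category does all the work. What the paper's heavier construction buys is operational content: the explicit disjoint color sets it builds are precisely the assignment of numbered colors to sub-packets that the paper later uses to realize the vector XOR coloring scheme, so its proof doubles as the code construction, whereas your fractional $q$ would still need such a conversion downstream. One shared caveat: both arguments implicitly assume every category in play is nonempty --- your division by $\prod_{v\in S}w(v)$, and equally the paper's partition of $\bigcup_{S\ni v}\mathcal{C}'_S$ among the vertices of $v$, degenerate when $w(v)=0$, and the clique correspondence itself fails then (a maximal clique of $\mathcal{G}$ through an empty category carries no clique of $G^u$, and a maximal $C$ may project to a non-maximal $S$); this is harmless but should be handled by first deleting zero-weight categories from $\mathcal{G}$ before invoking the correspondence.
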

      \begin{proof}
            Let $t(q)$ and $t(c)$ represent the objective functions corresponding to the respective feasible assignments $q(.)$ and $c(.)$ for problems (\ref{Prob:fracchrom}) and (\ref{Prob:fracmulti}) respectively. First, we show that if $\exists q(\cdot)$ satisfying (\ref{Prob:fracchrom}), then $\exists c(\cdot)$ satisfying (\ref{Prob:fracmulti}) such that $t(c) = t(q)$. From the properties of the category graph ${\cal G}$ in section \ref{categorysec}, we have the following observation: Any maximal clique $C=\{i_1,i_2,i_3 \ldots i_m\} \in G$ is such that it corresponds to a unique maximal clique $h(C)=S=\{v_1,v_2 \ldots v_m\} \in {\cal G}$ such that $i_j$ is in category $v_j$. This is because vertices belonging to a category are disconnected and $G$ has the complete bipartite structure. Now, we can set $c(S)= \sum \limits_{C:h(C)=S} q(C)$. Since a category $v$ contains $w(v)$ vertices, the assignment satisfies all the constraints of Problem (\ref{Prob:fracmulti}). Hence, one direction has been shown.
            
            Now, we show the other direction: if $\exists c(\cdot)$ satisfying (\ref{Prob:fracmulti}), then $\exists q(\cdot)$ satisfying (\ref{Prob:fracchrom}) such that $t(c) = t(q)$.  Let the node $a \in G$ belong to category $v$. Problem (\ref{Prob:fracmulti}) is a linear program with integer coefficients. Therefore, the optimal feasible assignment $c(\cdot)$ is rational. Hence, without loss of generality we assume that $c(S)$ is rational for all $S$. Define $c'(\cdot)= b c(\cdot)$ where $b$ is the least common multiple of all the denominators of the rational numbers $c(S)$. The assignment $c'(\cdot)$ satisfies:
 
  \begin{equation}\label{Eqn:weightfunc}
    \sum \limits_{S: v \in S}c'(S) \geq w(v) b
 \end{equation}
 
 Define the new objective function to be $t(c')=b t(c)$. Define the set of colors used in the new assignment $c'(\cdot)$ to be ${\cal L}= \{1, 2, 3 \ldots.. t(c')\}$. Let ${\cal C}'_{S} \subseteq {\cal L}$ be the unique set of $c'(S)$ numbered colors assigned to the maximal clique $S$. Define ${\cal S}_v=\{S: v \in S\}$. Choose a set of colors $ {\cal C}'_{a} \subset \bigcup \limits_{S\in {\cal S}_v} {\cal C}'_{S}$ such that $\lvert {\cal C}'_a \rvert \geq b$. It is always possible to find a pairwise disjoint sets of colors $\{ {\cal C}'_a \}$ for all the nodes $a$ belonging to the category $v$ such that $\bigcup \limits_{S \in {\cal S}_v} {\cal C}'_{S} = \bigcup \limits_{a \in v} {\cal C}'_a $ due to the constraint (\ref{Eqn:weightfunc}). Consider a maximal clique $C=\{a_1,a_2 \ldots a_m \}$ made of nodes in $G$ from different categories. Define ${\cal Q}_{C}=\bigcap \limits_{a \in C } {\cal C}'_{a}$ for all maximal cliques $C$ in ${\cal G}$.  

Note that a the category set $h(C)=\{v_1,v_2 \ldots v_m\}$ forms a maximal clique in ${\cal G}$ iff $C$ is a maximal clique in $G$. Therefore, by this procedure, any color in ${\cal L}$ is assigned to some ${\cal Q}_{C}$. We will now show that $C_1 \neq C_2 $ implies that ${\cal Q}_{C_1} \bigcap {\cal Q}_{C_2} = \emptyset$. We prove this by contradiction. Let $S_1$ (resp. $S_2$) corresponds to the maximal clique of categories corresponding to nodes in $C_1$ (resp. $C_2$). Consider $x \in {\cal Q}_{C_1} \bigcap {\cal Q}_{C_2}$ and $C_1 \neq C_2$. This implies that there is a color under assignment $c'(\cdot)$ assigned to different maximal cliques $S_1$ and $S_2$, i.e. $x \in  {\cal C}'_{S_1}\bigcap {\cal C}'_{S_2}$. But, this is not possible as no color is assigned to more than one maximal clique $S$. Therefore, ${\cal Q}_{C_1} \bigcap {\cal Q}_{C_2} = \emptyset$. Now, set $q(C)= \frac{\lvert {\cal Q} (C) \rvert}{b}$. Since $\lvert {\cal C}'_a \rvert \geq b ~\forall a$, it is easy to see that $q(\cdot)$ satisfies the constraints of (\ref{Prob:fracchrom}) and $t(q) =t(c)$.          
 
    \end{proof}
    
    Hence, the optimal Vector XOR coloring solution can be obtained by the linear program in (\ref{Prob:fracmulti}) which is a relaxation of (\ref{Eqn:integcol}). The number of variables is atmost $2^{2k^2}$ and each variable can take a value of at most $n$. Hence, the running complexity of this LP is $2^{O\left( k^2 \right)} n$. 
    
    \section{Simulation Results}
           In this section, we provide some simulation results regarding the performance of the vector XOR coloring algorithm for the ICDH problem with constant number of helpers ($k$).  The number of users for the simulation results is $600$. The number of users in the neighborhood of a particular helper is $\frac{600}{k}$. Every user is connected to at most one helper. We assume that the user requests for files arise from a library of $1400$ files. Every request is a random sample according to a zipf distribution on $1400$ files with the zipf parameter $0.5$. Let the number of files stored per caching helper be $M$. To populate the caches in helpers, random samples are drawn from the zipf distribution. The user requests are again drawn from the same zipf distribution. If a user request incurs a cache hit from the helper nearby then the particular request is replaced by resampling till a cache miss occurs. Note that it may happen that a particular user request may not be cached anywhere. Finally, we have a random set of user requests all of which have a cache miss with respect to a random cache state both determined by the zipf distribution. We apply the linear program in section \ref{sec:AlgvectorXOR} to determine the broadcast rate of the optimal vector XOR coloring solution. 
           
           The results presented are averaged ($20$ times for every realization) over random realizations of user requests and helper cache states. We compare the number of transmissions (or the broadcast rate) of the optimal vector XOR coloring solution, a greedy algorithm that finds a matching over the side information graph induced by the problem and the naive scheme of transmitting $600$ packets. A matching is a set of undirected edges such that every vertex is covered at most once by the edges. Each edge corresponds to an XOR transmission of the corresponding packets requested by the users represented by the vertices participating in the edge. A greedy matching is a low complexity heuristic to produce a good index coding solution. The plot in Fig. \ref{Fig:Comparison} shows the number of transmissions versus cache capacity per helper under different transmission schemes when $k=5,6,8$. We observe that the vector XOR coloring gives a constant gain ( about $2.4$) when the storage capacity per helper is about $450$ files over random cache states and random user requests. This shows that, with constant number of helpers, the vector XOR coloring can reduce the number of transmissions by a constant factor with reasonable storage capacity per helper.                      
           \begin{figure}
             \centering
             \includegraphics[width=11cm]{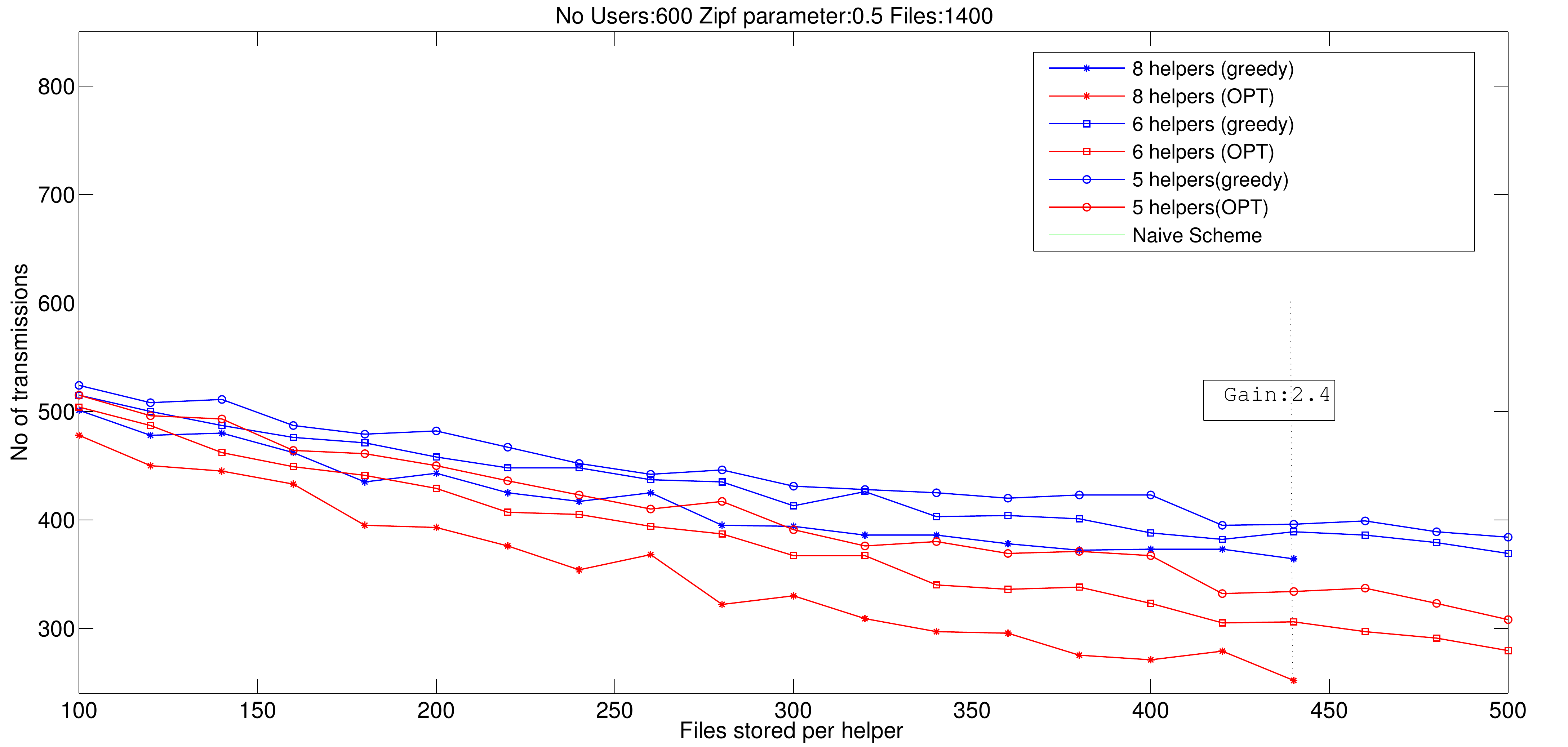}
             \caption{Number of transmissions versus storage capacity per helper for different schemes.}
             \label{Fig:Comparison}
           \end{figure}

	\section{conclusion}
	  We considered a special case of the index coding problem with $k$ caching helpers and $n$ users connected to a base station. When there is no cache hit for any user (file request not found in its neighboring helpers), users receive coded transmission from base station which is cognizant of the cache status of the helpers. Further, the cached files in the neighboring helpers act as side information for users. With interference constraints on helper placement and planar topology, we reduced a general problem, where users are connected to multiple but limited number of helpers, to a canonical form in which every user is connected to just one helper. For the equivalent problem reduced to such a Òcanonical formÓ, the problem of finding the optimal XOR coloring (resp. vector XOR coloring) scheme based on coloring (resp. fractional coloring) of the complement of the side information graph reduces to a problem of multi coloring (resp. fractional multi coloring) a fixed prototypical graph uniquely determined by $k$ only. Using this machinery, we showed that the achievable schemes are computable in time polynomial in $n$ if $k$ is a constant. 
	  
	  For future work, it would be interesting to show a corresponding tractability result with constant number of caching helpers for the scalar linear achievable schemes based on minrank. Further, it would be interesting to reduce the dependence on $k$ in the running time ($2^{O(k^2)} n$) of the fractional multicoloring algorithm for the vector XOR coloring scheme. It would also interesting to see if any low complexity greedy algorithm for $k >3$ can be designed for solving the XOR coloring scheme similar to the one outlined for the special case of $k=3$. 

         \appendix [Proof of Theorem \ref{Thm:XORcol}]
         The algorithm and the proof follow from results in \cite{onn2010nonlinear}. We outline them for the sake of exposition. The integer linear program to be solved is given by (\ref{IntegerProgCol}). $\mathbf{A}$ is a matrix with entries drawn from $\{0,1\}$. Let $\mathbf{c} \in \mathbb{Z}^{f_2(k)}$. Let the domain of integer vectors be $S= \{ \mathbf{c} \in \mathbb{Z}^{f_2(k)}: \mathbf{A}\mathbf{c}=\mathbf{w},~ n \mathbf{1} \geq \mathbf{c} \geq 0 \}$.  Here, we  assume that $\mathbf{c} \leq n\mathbf{1}$ (this is true since the total number of vertices is $n$ in the graph) for ensuring that the domain is bounded. Let the integral null space of $\mathbf{A}$ be defined as: ${\cal L}(\mathbf{A})=\{ \mathbf{c} \in \mathbb{Z}^{f_2(k)} : \mathbf{A}\mathbf{c}=0,~ \mathbf{c} \neq 0\}$.  
	
	\begin{definition}
	       Two vectors $\mathbf{u},\mathbf{v} \in \mathbb{R}^{f_2(k)}$ are said to obey the partial order $\mathbf{u} \sqsubset \mathbf{v}$(or one is \textit {conformal} to the other) when they are both in the same orthant and $\lvert u_i \rvert \leq \lvert v_i \rvert,~\forall i$ (component wise $\mathbf{v}$ dominates $\mathbf{u}$ in absolute value) \cite{onn2010nonlinear}. \hfill $\lozenge$
	\end{definition}
	 
	\begin{definition}
	       Graver Basis of $\mathbf{A}$ is defined to be the set ${\cal G}(\mathbf{A})$ of $\sqsubset$-minimal elements in ${\cal L}(\mathbf{A})$ or the set of conformal minimal elements of the integral null space of $\mathbf{A}$ \cite{onn2010nonlinear}. \hfill $\lozenge$
	\end{definition}
	
	\begin{definition}
	       A finite sum $\mathbf{v} = \sum \mathbf{a}_i$ is a conformal sum  if all summands $\mathbf{a}_i$ lie in the same orthant as $\mathbf{v}$ \cite{onn2010nonlinear}. \hfill $\lozenge$
	\end{definition}
	
	\begin{lemma}
	        Any $\mathbf{c} \in {\cal L} (\mathbf{A}) $ is a conformal sum $\mathbf{c} = \sum \mathbf{g}_i$ of graver basis elements $\mathbf{g}_i \in {\cal G}(\mathbf{A})$ with, possibly, some elements appearing more than once \cite{onn2010nonlinear}. \hfill $\square$
	\end{lemma}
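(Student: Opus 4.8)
The plan is to prove the statement by a descent argument, inducting on the $\ell_1$ norm $\|\mathbf{c}\|_1 = \sum_i |c_i|$, a quantity that respects conformality. At each step I would peel off a single Graver basis element that is conformal to and dominated by $\mathbf{c}$, then invoke the inductive hypothesis on the strictly smaller remainder, all the while staying inside one orthant.

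First I would show that every nonzero $\mathbf{c} \in {\cal L}(\mathbf{A})$ has a Graver basis element sitting conformally below it. Consider the set $T = \{\mathbf{h} \in {\cal L}(\mathbf{A}) : \mathbf{h} \sqsubset \mathbf{c}\}$. It contains $\mathbf{c}$, so it is nonempty, and since $\mathbf{h} \sqsubset \mathbf{c}$ forces $|h_i| \leq |c_i|$ with integer entries, $T$ is finite and hence possesses a $\sqsubset$-minimal element $\mathbf{g}$. By transitivity of $\sqsubset$, any element lying below $\mathbf{g}$ in ${\cal L}(\mathbf{A})$ would also lie below $\mathbf{c}$ and thus belong to $T$; minimality of $\mathbf{g}$ within $T$ therefore promotes to minimality in all of ${\cal L}(\mathbf{A})$, so $\mathbf{g} \in {\cal G}(\mathbf{A})$.

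Next I would verify the descent step. Set $\mathbf{c}' = \mathbf{c} - \mathbf{g}$. Linearity gives $\mathbf{A}\mathbf{c}' = \mathbf{A}\mathbf{c} - \mathbf{A}\mathbf{g} = 0$, so $\mathbf{c}'$ is either $0$ or again an element of ${\cal L}(\mathbf{A})$. The crucial observation is that conformality is preserved: because $\mathbf{g} \sqsubset \mathbf{c}$, each coordinate $g_i$ is either zero or shares the sign of $c_i$ and satisfies $|g_i| \leq |c_i|$, whence $c_i' = c_i - g_i$ retains the sign of $c_i$ and obeys $|c_i'| = |c_i| - |g_i|$. This yields $\mathbf{c}' \sqsubset \mathbf{c}$, placing $\mathbf{c}'$ in the same orthant as $\mathbf{c}$, and summing over $i$ gives $\|\mathbf{c}'\|_1 = \|\mathbf{c}\|_1 - \|\mathbf{g}\|_1 < \|\mathbf{c}\|_1$ since $\mathbf{g} \neq 0$.

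Finally I would close the induction. If $\mathbf{c}' = 0$, then $\mathbf{c} = \mathbf{g}$ is already a one-term conformal sum of a Graver element. Otherwise the inductive hypothesis writes $\mathbf{c}' = \sum_i \mathbf{g}_i$ as a conformal sum of Graver basis elements, each lying in the orthant of $\mathbf{c}'$, which is the orthant of $\mathbf{c}$; appending $\mathbf{g}$ produces $\mathbf{c} = \mathbf{g} + \sum_i \mathbf{g}_i$ with every summand in the orthant of $\mathbf{c}$, as required. The one place that demands care is the sign-compatibility identity $|c_i - g_i| = |c_i| - |g_i|$, since it simultaneously guarantees that the remainder stays within a single orthant and that the $\ell_1$ norm strictly decreases; this is the crux of the argument, and although elementary it is exactly where the definition of the partial order $\sqsubset$ does its work.
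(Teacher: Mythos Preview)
The paper does not supply its own proof of this lemma; it merely states the result, cites \cite{onn2010nonlinear}, and closes with $\square$. Your descent argument---inducting on $\|\mathbf{c}\|_1$, peeling off a $\sqsubset$-minimal element of $\{\mathbf{h}\in{\cal L}(\mathbf{A}):\mathbf{h}\sqsubset\mathbf{c}\}$, and using the identity $|c_i-g_i|=|c_i|-|g_i|$ to keep the remainder conformal and strictly smaller---is correct and is precisely the standard proof one finds in Onn's monograph, so there is nothing to contrast.
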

	 Now we state the results from \cite{onn2010nonlinear} with some modification to the case under consideration here.
	
	\begin{lemma}
	       \cite{onn2010nonlinear}  Define an augmentation oracle which, when given any feasible point $\mathbf{a} \in S$ to start with, checks if there exists a $\mathbf{g} \in {\cal G}(\mathbf{A}): \mathbf{1}^T \mathbf{g} >0,~ 0 \leq \mathbf{a}+\mathbf{g} \leq n \mathbf{1} $. If it exists, the oracle returns  $\mathbf{a}+\mathbf{g}$, or otherwise, returns $\mathbf{a}$. Then the augmentation oracle returns the supplied input feasible point if the input is optimal for (\ref{IntegerProgCol}) or otherwise produces a better feasible solution. \hfill $\square$
	\end{lemma}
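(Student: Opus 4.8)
The plan is to prove the two implications packaged in the lemma: that the oracle returns its input exactly when the input is optimal, and that otherwise it hands back a strictly improving feasible point. The easy direction is immediate from the construction. If the oracle reports a Graver element $\mathbf{g}$ satisfying the box constraint $0 \le \mathbf{a} + \mathbf{g} \le n\mathbf{1}$, then $\mathbf{A}(\mathbf{a}+\mathbf{g}) = \mathbf{A}\mathbf{a} = \mathbf{w}$ because $\mathbf{g} \in {\cal L}(\mathbf{A})$, so $\mathbf{a}+\mathbf{g} \in S$; and since $\mathbf{g}$ strictly improves the linear objective $\mathbf{1}^T\mathbf{c}$ of (\ref{IntegerProgCol}), the returned point is strictly better. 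Hence whenever the oracle moves, the input was not optimal.

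The substantive direction is the converse: if $\mathbf{a} \in S$ is not optimal, I must exhibit a single Graver element that both improves the objective and keeps $\mathbf{a}$ inside the box. First I would pick any feasible $\mathbf{b} \in S$ with strictly better objective and form the difference $\mathbf{c} = \mathbf{b} - \mathbf{a}$. Since $\mathbf{A}\mathbf{a} = \mathbf{A}\mathbf{b} = \mathbf{w}$, the vector $\mathbf{c}$ lies in the integral null space ${\cal L}(\mathbf{A})$, so the conformal decomposition lemma applies and writes $\mathbf{c} = \sum_i \mathbf{g}_i$ with every $\mathbf{g}_i \in {\cal G}(\mathbf{A})$ conformal to $\mathbf{c}$. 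Linearity of the objective gives $\mathbf{1}^T\mathbf{c} = \sum_i \mathbf{1}^T\mathbf{g}_i$; because the total change strictly improves the objective, at least one summand, call it $\mathbf{g}$, must individually improve it. This $\mathbf{g}$ is the desired augmenting direction.

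The key step, and the one I expect to be the main obstacle, is verifying box feasibility of the single step $\mathbf{a} + \mathbf{g}$. Here conformality does all the work: because $\mathbf{g}$ lies in the same orthant as $\mathbf{c} = \mathbf{b} - \mathbf{a}$ and is dominated coordinatewise in absolute value, in each coordinate $j$ we have $|g_j| \le |b_j - a_j|$ with $g_j$ of the same sign as $b_j - a_j$. Consequently $a_j + g_j$ lies on the segment between $a_j$ and $b_j$, and since both $\mathbf{a}$ and $\mathbf{b}$ satisfy $0 \le \cdot \le n$ coordinatewise, so does $\mathbf{a} + \mathbf{g}$. Combined with $\mathbf{A}\mathbf{g} = 0$, this shows $\mathbf{a} + \mathbf{g} \in S$, so the oracle will indeed find an improving move and not return $\mathbf{a}$. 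Taken together the two directions give the stated equivalence. I would close by noting that the oracle accepts $\mathbf{g}$ precisely when it strictly improves the linear objective of (\ref{IntegerProgCol}), and that this lemma is exactly the engine driving the polynomial iteration count behind Theorem \ref{Thm:XORcol}.
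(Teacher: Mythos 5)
Your proof is correct, and it is essentially the canonical argument: the paper itself gives no proof of this lemma (it is stated with a bare citation to \cite{onn2010nonlinear}), and what you have reconstructed is exactly the standard conformal-decomposition proof from that source --- take the difference $\mathbf{b}-\mathbf{a}$ of two feasible points, decompose it conformally into Graver elements, extract an improving summand by linearity, and use sign-compatibility plus coordinatewise domination to keep the single step inside the box $0 \leq \cdot \leq n\mathbf{1}$. One detail worth flagging: the lemma as stated in the paper tests for $\mathbf{1}^T\mathbf{g} > 0$, which is the improving direction for a \emph{maximization}, whereas (\ref{IntegerProgCol}) is a minimization; the correct test is $\mathbf{1}^T\mathbf{g} < 0$ (equivalently, since Graver bases are symmetric under negation, one may keep the stated test and step to $\mathbf{a}-\mathbf{g}$). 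Your converse argument implicitly uses the correct sign --- you pick $\mathbf{b}$ with strictly smaller objective, so $\mathbf{1}^T(\mathbf{b}-\mathbf{a}) < 0$ and the extracted summand satisfies $\mathbf{1}^T\mathbf{g} < 0$ --- so your proof establishes the mathematically intended statement, but you should state explicitly that you are correcting the paper's sign convention rather than matching the oracle test as literally written.
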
 
	
	Note that one augmentation step takes $O(f_3(k)f_2(k) \log n )$ where $f_3(k)$ is the number of vectors in the graver basis and $\log n$ is due to the encoding complexity of every entry value. We give an algorithm, named $\rm{OPTINTPROG}$, (modified for purposes here from \cite{onn2010nonlinear}) which issues subsequent calls to the above augmentation oracle to improve on the current objective value to reach the optimum. Also, we bound the number of such calls. 
	\begin{lemma}
	         Intialize the algorithm $\rm{OPTINTPROG}$ with the feasible solution where $c_v = w(v)$ (the clique variable, corresponding to a singleton category vertex, is set to the weight of the vertex) and $c_{\bar{s}}=0,~ \forall \bar{s}:\ell(\bar{s}) >1$ ($\ell$ denotes the length of the string). Call the augmentation oracle with this initial feasible solution. If the output returned is different from the feasible solution passed, then the output is used to call the oracle again. This procedure is followed till the optimum is reached. The algorithm $\rm{OPTINTPROG}$ issues at most $2nf_2(k)$ calls to the augmentation oracle. \hfill $\square$
	\end{lemma}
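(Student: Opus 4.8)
The plan is to read this lemma as a termination-and-counting statement layered on top of the augmentation oracle of the preceding lemma, which already guarantees that every call either certifies optimality (returning its input unchanged) or returns a strictly better feasible point of $S$. Consequently the total number of oracle calls equals the number of strictly-improving augmentation steps plus the single terminating call that detects optimality, so it suffices to bound the number of improving steps. I would first argue that the procedure never stalls at a non-optimal point, and then bound the improving steps by integrality and boundedness of the objective.

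For the no-stalling part I would invoke the conformal-decomposition lemma directly. Let $\mathbf{a}\in S$ be the current feasible point and let $\mathbf{c}^{*}$ be any optimizer of (\ref{IntegerProgCol}). Then $\mathbf{c}^{*}-\mathbf{a}\in{\cal L}(\mathbf{A})$, so it admits a conformal sum $\mathbf{c}^{*}-\mathbf{a}=\sum_{i}\mathbf{g}_{i}$ with $\mathbf{g}_{i}\in{\cal G}(\mathbf{A})$, each summand lying in the orthant of $\mathbf{c}^{*}-\mathbf{a}$ and dominated by it coordinatewise. Conformality then forces, for every coordinate $j$, the value $a_{j}+(\mathbf{g}_{i})_{j}$ to lie between $a_{j}$ and $c^{*}_{j}$, hence inside $[0,n]$; since $\mathbf{g}_{i}\in{\cal L}(\mathbf{A})$ also preserves $\mathbf{A}(\mathbf{a}+\mathbf{g}_{i})=\mathbf{w}$, each single step $\mathbf{a}+\mathbf{g}_{i}$ is already feasible, i.e. $0\le\mathbf{a}+\mathbf{g}_{i}\le n\mathbf{1}$. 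Moreover $\mathbf{1}^{T}(\mathbf{c}^{*}-\mathbf{a})=\sum_{i}\mathbf{1}^{T}\mathbf{g}_{i}$, so whenever $\mathbf{a}\neq\mathbf{c}^{*}$ at least one summand is strictly improving and lies in the admissible set the oracle searches. The oracle may pick a different improving element, but any such choice still strictly improves the objective and stays feasible, so the iteration can only return its input at a true global optimum and cannot terminate at a local trap.

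For the count I would use that the objective $\mathbf{1}^{T}\mathbf{c}=\sum_{S\in{\cal C}}c(S)$ is integer-valued: each successful call replaces $\mathbf{a}$ by $\mathbf{a}+\mathbf{g}$ with $\mathbf{g}$ strictly improving, so the objective drops by at least one per step. On the bounded domain $S$ one has $0\le c(S)\le n$ for each of the $f_{2}(k)$ coordinates, whence $0\le\mathbf{1}^{T}\mathbf{c}\le n f_{2}(k)$ for every $\mathbf{c}\in S$. A monotone integer quantity confined to $\{0,1,\dots,n f_{2}(k)\}$ can change by unit steps at most $n f_{2}(k)$ times, so there are at most $n f_{2}(k)$ strictly-improving calls; adding the single terminating call yields at most $n f_{2}(k)+1\le 2n f_{2}(k)$ calls, which is the claimed bound.

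The objective bound and the unit-decrement count are routine; the step that needs care is the global-optimality guarantee, namely checking that the oracle's \emph{single}-element step $\mathbf{a}+\mathbf{g}$ (rather than a maximal multiple) together with the conformal decomposition already rules out premature termination, which I handled above via the coordinatewise domination. The main obstacle is thus marrying the conformal-decomposition lemma, which controls correctness and prevents stalling, with the crude box estimate $0\le c(S)\le n$, which controls the iteration count; I would emphasize that the factor $2$ in $2n f_{2}(k)$ merely absorbs the extra terminating call and the conservative box bound on $\mathbf{1}^{T}\mathbf{c}$ (the initial solution in fact has objective $\sum_{v}w(v)=n$, so the true step count is even smaller) rather than reflecting any finer structural feature of the category graph.
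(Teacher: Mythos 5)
Your proof is correct, but it is genuinely different from what the paper does: the paper's entire proof is a deferral to \cite{onn2010nonlinear}, observing only that since the objective is $\mathbf{1}^T\mathbf{c}$ rather than a general $\mathbf{x}^T\mathbf{c}$, the bit-scaling layer of Onn's augmentation algorithm can be dropped and the complexity guarantee specializes to the stated call bound. You instead reconstruct the argument from first principles: the no-stalling half via conformal decomposition of $\mathbf{c}^{*}-\mathbf{a}$ into Graver elements (correctly noting that same-orthant summation automatically gives coordinatewise domination, so each \emph{single} Graver step $\mathbf{a}+\mathbf{g}_i$ stays in the box $[0,n\mathbf{1}]$ and preserves $\mathbf{A}\mathbf{c}=\mathbf{w}$, hence the oracle can only return its input at a global optimum), and the counting half via integrality of the objective together with the crude bound $0\le\mathbf{1}^T\mathbf{c}\le n f_2(k)$ on the feasible box, giving at most $n f_2(k)+1\le 2nf_2(k)$ calls. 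Your route buys a self-contained, elementary proof and in fact a sharper count (from the initial point $\sum_v w(v)=n$ one gets at most $n+1$ calls), whereas the paper's route buys brevity and inherits the general machinery (Graver-best steps, bit scaling) that would be needed for arbitrary objectives. One small point worth flagging that you silently normalized: the oracle as stated in the paper searches for $\mathbf{g}$ with $\mathbf{1}^T\mathbf{g}>0$, which for the minimization (\ref{IntegerProgCol}) has the wrong sign as written; since ${\cal G}(\mathbf{A})$ is symmetric under negation this is immaterial, but a careful write-up should either read it as $\mathbf{1}^T\mathbf{g}<0$ or take the step $\mathbf{a}-\mathbf{g}$, as your ``strictly improving'' phrasing implicitly does.
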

	\begin{proof}
	          The proof bounding the running time of  this algorithm is given in \cite{onn2010nonlinear} for a bit scaling version of the algorithm for a general linear objective $\mathbf{x}^T\mathbf{c}$. Since in our case, $\mathbf{x}^T= \mathbf{1}^T$, the algorithm and the complexity guarantee reduces to what is mentioned in the lemma as bit scaling is not necessary.
	\end{proof} 
	Hence, the time complexity of finding the optimal solution is $O(f_3(k)(f_2(k))^2 n \log n)$. The complexity with respect to $k$ for this algorithm depends on the number of vectors in the graver basis ${\cal G}(\mathbf{A})$. Computing the graver basis may take a long time, but can be done offline since the matrix $\mathbf{A}$ is fixed for a given $k$. We are not able to find general bounds on $f_3(k)$ for any $k$. It is known that the number of elements on the graver basis for any matrix is finite (in our case it depends on $k$). We provide the algorithm from \cite{onn2010nonlinear} to compute the graver basis which can be carried out offline. Let $r(k)$ denote the rank of the matrix $\mathbf{A}$. Let $c_i$ denote the $i$-th element in the vector $\mathbf{c}$.
	
	\begin{enumerate}
	          \item Enumerate all integer null space vectors, $\mathbf{c} \in {\cal L}(\mathbf{A})$, such that $\max \limits_i \lvert c_i \rvert \leq (f_2(k)-r(k)) \Delta(\mathbf{A})$ where $\Delta(\mathbf{A})$ is the largest determinant of a square submatrix of $\mathbf{A}$. 
	         \item Find the $\sqsubseteq$ minimal elements from this and add them to the graver basis. 
	\end{enumerate} 
	This solution is aimed at solving the clique cover problem for very small $k$.

	 \pagenumbering{arabic}
	\bibliographystyle{IEEEtran}
	\bibliography{Indhelp}

	\end{document}